\DeclarePairedDelimiter{\ceil}{\lceil}{\rceil}
\newenvironment{framed}{\begin{mdframed}[backgroundcolor=blue!5,outerlinewidth=0pt,outerlinecolor=blue!5,innerleftmargin=10pt,innerrightmargin=0pt,innertopmargin=-10pt,innerbottommargin=-0pt]}{\end{mdframed}}
\newtheorem{lemma}{Lemma}
\newtheorem{theorem}{Theorem}
\newtheorem{definition}{Definition}
\newtheorem{proof}{Proof}
\newcommand{\outline}[1]{\ifnum\outlineon>0\
	\\\noindent\fbox{\begin{minipage}{\linewidth}{\footnotesize
				#1}\end{minipage}}\\\\\fi}
\newcommand{\exclude}[1]{}
\begin{document}
\title{Analysis of Dynamic Memory Bandwidth Regulation in Multi-core Real-Time Systems}

\author{
  \IEEEauthorblockN{
    Ankit Agrawal\IEEEauthorrefmark{1},
    Renato Mancuso\IEEEauthorrefmark{2},
    Rodolfo Pellizzoni\IEEEauthorrefmark{3}, 
    Gerhard Fohler\IEEEauthorrefmark{1}
  }
  \IEEEauthorblockA{
    \IEEEauthorrefmark{1}Technische Universit{\"a}t Kaiserslautern, Germany, 
    \{agrawal, fohler\}@eit.uni-kl.de
  }
  \IEEEauthorblockA{
    \IEEEauthorrefmark{2}Boston University, USA,
    rmancuso@bu.edu
  }
  \IEEEauthorblockA{
    \IEEEauthorrefmark{3}University of Waterloo, Canada,
    rpellizz@uwaterloo.ca
  }
}

\outlineon=0

\maketitle

\begin{abstract}

One of the primary sources of unpredictability in modern multi-core
embedded systems is contention over shared memory resources, such as
caches, interconnects, and DRAM. Despite significant achievements in
the design and analysis of multi-core systems, there is a need for a
theoretical framework that can be used to reason on the worst-case
behavior of real-time workload when both processors and memory
resources are subject to scheduling decisions. 

In this paper, we focus our attention on dynamic allocation of main memory bandwidth.
In particular, we study how to determine the worst-case response time of tasks
spanning through a sequence of time intervals, each with a different bandwidth-to-core assignment.
We show that the response time computation can be reduced to a maximization problem 
over assignment of memory requests to different time intervals, and we provide an 
efficient way to solve such problem. As a case study, we then demonstrate
how our proposed analysis can be used to improve the schedulability
of Integrated Modular Avionics systems in the presence of memory-intensive 
workload.

\end{abstract}
 
\begin{IEEEkeywords}
Real-time Systems; Multicore Processing; Dynamic Memory Bandwidth Regulation; WCET in Multicore; Memory Scheduling  
\end{IEEEkeywords}

\IEEEpeerreviewmaketitle

\makeatletter
\let\@ORGmakecaption\@makecaption
\long\def\@makecaption#1#2{\@ORGmakecaption{#1}{#2}\vskip\belowcaptionskip\relax}
\makeatother

\section{Introduction}

Over the last decade, multi-core systems have rapidly increased in
popularity and they are now the de-facto standard in the embedded
computing industry. Multi-core systems are significantly more
challenging to analyze compared to their single-core counterparts due
to the extensive sharing of hardware resources among logically
independent execution flows. The primary source of performance
unpredictability, in this class of systems, can be identified as the
memory hierarchy. In fact, the memory hierarchy in multi-core
platforms is comprised of a number of components that are concurrently
accessed by multiple cores. These include: multi-level CPU caches,
shared memory controllers and DRAM banks, and shared I/O devices. The
interplay of accesses originated by multiple cores has a direct impact
on the timing of subsequent memory accesses. The resulting temporal
variability is in the range of multiple orders of magnitude, meaning that
inaccurate performance modeling and analysis can lead to overly
pessimistic worst-case execution time (WCET) estimates.

Despite the remarkable achievements in the analysis of hard real-time
workload on multi-core systems, there is a fundamental lack of
self-contained theoretical frameworks that can be used to reason on
the schedulability of a generic multi-core hard real-time workload
when both CPU and memory resources are subject to scheduling
decisions. In fact, while consolidated techniques are used to reason
about CPU scheduling, comparatively less general results are available
to reason on memory scheduling. An even slimmer body of works has
provided general results to reason on co-scheduling of CPU and
memory. The majority of works in this area assume fixed assignment of
memory resources to CPUs.

{\bf Memory Scheduling:} there are two dimensions to the problem of
assigning memory resources to applications. The first dimension is
\emph{space} scheduling, concerning the allocation over time of memory
space (e.g., cache lines, DRAM banks, scratchpad pages). A second
dimension is \emph{temporal} scheduling, i.e., scheduling of access to
a shared memory interface (e.g., an interconnect, a bus, or a memory
controller). \emph{In this paper, we focus on the temporal dimension
  of memory scheduling.} In a nutshell, memory interfaces/subsystems
are associated with a characteristic sustainable bandwidth that can be
partitioned among the CPUs of a multi-core system. If the
bandwidth-to-cores assignment is determined offline and remains unchanged
over time, we say that memory bandwidth is {\bf statically
  partitioned}. Conversely, if bandwidth is {\bf dynamically assigned}
to cores, we say that memory bandwidth is subject to
\emph{scheduling}. We hereafter interchangeably use the terms ``memory
scheduling'', ``memory bandwidth scheduling'', or ``bandwidth
partitioning'' referring to the same concept.

Since memory bandwidth is constrained and often represents a
bottleneck in multi-core systems, memory scheduling is an important
dimension to consider and a way to achieve important real-time
performance improvements. Clearly, if the memory bandwidth assigned to
each core changes over time, this will have an effect on the response
time of tasks. In this case, how can the worst-case response time be
calculated? In this paper, we address this question. More specifically,
we study the problem of determining the worst-case response time for a
task that spans a sequence of time intervals, each with a
different bandwidth-to-cores assignment.

In this paper, we make the following contributions:

\begin{enumerate}
  \item we improve response time calculation under static (over time)
    but arbitrary (across cores) bandwidth partitioning;
  \item we provide a general framework to perform response time
    analysis under dynamic bandwidth partitioning. Our approach can be
    used to analyze memory schedulers as long as: (i) changes in
    bandwidth-to-cores allocation are time-triggered; or (ii) a
    critical instant can be found for the possible CPU-to-tasks and
    bandwidth-to-cores scheduling decisions;
  \item we demonstrate how the proposed analysis technique can be used
    in a time-triggered memory scheduling scenario, for an Integrated
    Modular Avionics (IMA) system. In particular, we show that 
    dynamic bandwidth allocation significantly
    outperforms static allocation in the presence of
    varying memory-intensive workload.
    \end{enumerate}

\newcommand{\Iconc}{\bar{I}}
\newcommand{\emusum}{\beta}
\newcommand{\jb}{\bar{j}}
\newcommand{\jt}{\tilde{j}}

\section{Background}
\label{sec:background_gen}

A memory interface is characterized by a \emph{maximum guaranteed
  bandwidth}. It is generally easy to analyze the temporal behavior of
memory requests when the interface operates below the maximum
guaranteed bandwidth~\cite{memguard_tc}. Conversely, if the rate of
memory requests exceeds such a threshold, the behavior of the memory
subsystem can be hard to analyze, or lead to overly pessimistic
worst-case estimates~\cite{cmu_dram}. In multi-core systems, however,
the available memory bandwidth can be arbitrarily distributed among
cores. Take a 2-core system for instance, as depicted in
Figure~\ref{fig:mem_part}. Workload on the two cores can be either
CPU-intensive (blue), or memory-intensive (red). For simplicity, the
figure assumes that CPU-intensive workload is unaffected by changes in
memory bandwidth (BW) assignment. Conversely, memory-intensive
workload is roughly linearly affected by it. An {\bf even} assignment
as depicted in Figure~\ref{fig:mem_part}(a) would provide 50\% of the
available memory bandwidth to each core. Even partitioning is not
flexible: mostly memory intensive workload is deployed on core Core~A,
while mostly CPU-intensive workload is scheduled on Core~B. As such,
workload is penalized on Core~A while memory bandwidth is wasted on
Core~B. Under this setup, the memory-intensive workload on Core~A and
B take 5 and 3 time units to complete, respectively. The overall
utilization is 95\%.

If Core~A is known to run memory-intensive tasks while Core~B mostly
processes CPU-intensive workload, it is beneficial to perform an {\bf
  uneven} assignment -- e.g., 80\% and 20\% of the available bandwidth
assigned to Core~A and B, respectively. This is depicted in
Figure~\ref{fig:mem_part}(b). In this case, bandwidth can be distributed
to better meet the CPU/memory needs of workload on the various
cores. The memory-intensive workload on Core~A can benefit from this
assignment, now completing in 2 time units. However, the (shorter)
memory-intensive workload on Core~B is negatively affected, completing
in 4 time units. Overall utilization decreases to 85\% in our example.

\begin{figure}
  \centering 
  \includegraphics[width=0.80\linewidth]{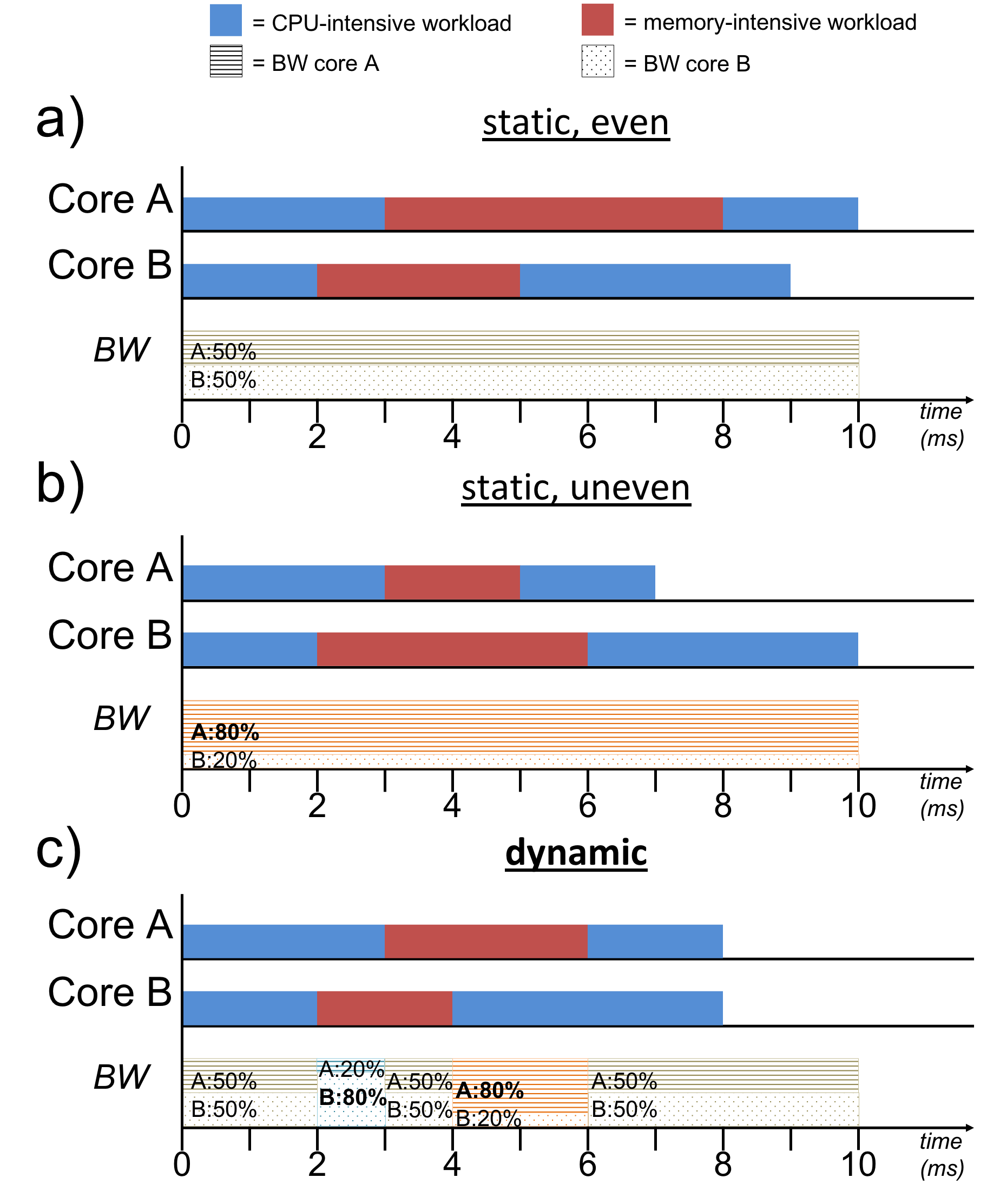}
  \vspace{-0.2cm}
  \caption{Example of static and even (a), static and uneven (b), and
    dynamic (c) memory bandwidth management on a 2-core (A and B)
    system.}
  \label{fig:mem_part}
  \vspace{-0.3cm}
\end{figure}

In both uneven and even partitioning, memory bandwidth allocation is
\emph{statically} decided at design time, i.e., it does not change over
time. The workload on each core, however, can undergo variations in
terms of memory requirements. This is often the case as more/less
memory-intensive tasks (or partitions) are scheduled on each core. As
such, it is natural to consider a scheme where bandwidth-to-cores
assignment is varied over time. In this case, we talk of {\bf dynamic}
bandwidth partitioning, i.e., memory scheduling. Figure~\ref{fig:mem_part} 
depicts one such example. Here, when only one core is executing 
memory-intensive workload, it is given 80\% of the available bandwidth; when 
both are executing the same type of workload, the bandwidth is evenly 
distributed. Under the new
scheme, the system operates at 80\% utilization. In general, dynamic
bandwidth assignment can yield significant performance improvement, because
it is possible to produce an assignment scheme that follows the memory
requirements of scheduled workload over time.

In the next sections, we address the problem of computing the time it
takes in the worst-case to complete execution of workload that: (i)
has known memory and CPU requirements; and (ii) spans over an
arbitrary and known sequence of bandwidth assignments.

\newcommand{\Q}{q}

\section{System Model and Assumptions}
\label{sec:sys_model}

We hereby discuss the assumptions considered in our work. We also
provide the basic terminology and notation required to present our
results. 
\par{\bf Multi-core Model:} in this work, we assume a homogeneous
multi-core system with $m$ cores. We use the index $i$ to refer to any
of the $m$ cores, i.e., $i \in \{1, \ldots, m\}$. We make no assumption
on the cache hierarchy, as we focus on the behavior of tasks with
respect to main memory accesses. We only assume that hits in
last-level cache (LLC) do not generate main memory traffic. Main
memory transactions have fixed size, typically one cache line,
indicated with $L_{size}$. 
We assume that access to main memory is granted to cores/processors
following a round-robin scheme. We assume that the time to perform a
single memory transaction is bounded in the interval: $[L_{min}, L_{max}]$. We 
do not require all memory transactions to be of a fixed size; but assume that, 
in the worst-case, all transactions have the maximum size. As an additional 
simplification, we assume no transaction parallelism, meaning that $L_{max}$ is 
also the maximum amount of
interference that a given core can suffer due to an active memory
transaction directed to a different core. No re-ordering of requests
originated by different cores occurs in the system. This behavior can
be achieved in a traditional COTS DRAM setup by assigning private
memory banks to cores~\cite{palloc, cmu_dram,
  Pellizzoni16:mem_serv}. With private banking, the available DRAM
banks are partitioned among the available $m$ cores. These assumptions
make the considered model compatible with the work in~\cite{SCE_mag,
  SCE_ecrts15, SCE_ecrts17}.

\par{\bf Workload Model:} we consider a partitioned system in which
each task in a set of tasks is statically assigned to one of the $m$
cores~\cite{Davis11:mc_sched_survey}. Since our focus is on the
behavior of workload in memory, we abstract away the details of each
task and only consider the ``load'', or ``workload'' in terms of CPU
time and the number of memory transactions that need to be completed
by a given deadline. The load can correspond to a single task
instance, or to an entire busy-period. This is in line with the
approach followed in~\cite{SCE_ecrts15, SCE_ecrts17,
  gang_mg_analysis}. Reasoning in terms of workload allows us to
remain generic with respect to the exact task scheduling strategy
used at the CPU. For instance, under preemptive rate-monotonic
scheduling (RM), in order to analyze the schedulability of a task
$\tau$, one would consider the deadline-constrained ``load'' comprised
by the execution (CPU time and memory transactions) of one instance of
$\tau$, as well as that of all the instances of interfering
higher-priority tasks. The deadline of the workload will be the
deadline of $\tau$.

Without loss of generality, we model deadline-constrained workload on
a core $i$ under analysis using three parameters: $C_i$, $\mu_i$, and
$D_i$. Here, $C_i$ represents the worst-case amount of time required
for pure execution on the CPU (no memory). For ease of notation, we
will always consider the worst-case execution time in slots of
$L_{max}$ and indicate the latter with $E_i =
\ceil{\frac{C_i}{L_{max}}}$. It must hold that $E_i > 0$. Next,
$\mu_i$ represents the worst-case number of main memory
transactions~\cite{clockdown} to be completed by the relative deadline
$D_i$. We often use $\emusum = E_i + \mu_i$ as a shorthand notation
for the overall CPU and memory requirement of the workload under
analysis. We assume that new workload is always released synchronously
with respect to regulation periods, and scheduling decisions (on both
CPU and memory) are taken at the boundaries of regulation periods.

\par{\bf Memory Bandwidth Regulation Model:} in order to unevenly
partition the memory bandwidth across cores, a budget-based memory
bandwidth regulation scheme is used, such as
MemGuard~\cite{memguard}. In this regulation scheme, per-core
bandwidth regulators use hardware-implemented performance counters to
monitor the number of memory transactions performed by each core over
a period of time $P$. For this reason, $P$ takes the name of
``regulation period''. Note that the number of memory transactions over $P$ is
a measure of bandwidth. Since the maximum latency of a single memory
transaction is $L_{max}$, then in the worst-case it is always possible
to perform $Q = \frac{P}{L_{max}}$ memory transactions in $P$. Each
core can then be assigned a different budget $\Q_i$, as long as $\sum_i
\Q_i \le Q$. However, in order to fully utilize the already constrained
memory bandwidth, we consider the case $\sum_i \Q_i = Q$ without loss
of generality. The budget assigned to all the $m$ cores forms a
vector, namely $\mathcal{Q} = \{\Q_1, \ldots, \Q_m\}$.

The key idea of memory bandwidth regulation is the following. A core
$i$ is given a budget $\Q_i$, which represents the number of memory
transactions that core $i$ is allowed to perform during a regulation
period $P$. The budget is replenished to $\Q_i$ at time zero and at
every instant $k \cdot P$, with $k \in \mathbb{N}$. During a
regulation period, the core executes tasks normally, performing memory
transactions as needed. A hardware performance counter monitors the
number of memory transactions, decreasing the residual budget
accordingly. If core $i$ depletes its budget $\Q_i$ before the next
replenishment, core $i$ is stalled until the next replenishment. $P$
is a system-wide parameter which should be smaller than the minimum
task period in the system. $P$ is often experimentally set to
$1$~ms~\cite{memguard, memguard_tc16}.

\par{\bf Memory Schedule:} in this paper, we assume that the memory
schedule is known, or that a critical instant can be found on the
bandwidth-to-core assignment rule, if an online memory scheduling rule
is used. This opens a whole new set of questions that are out of the
scope of this work: e.g. optimality, or existence of critical instants
for memory schedulers. As depicted in
Figure~\ref{fig:mem_part}(c), a memory schedule $\mathcal{S} = \{B^1,
\ldots, B^N\}$ is a time-ordered sequence of $N$ \emph{memory budget
  assignment intervals} $B^j$. Each $B^j$ is of the form $B^j =
(\mathcal{Q}^j, L^j)$, where $\mathcal{Q}^j = \{\Q^j_1, \ldots, \Q^j_m\}$ is the budget-to-cores
assignment used in interval $j$, and $L^j$ is the length in regulation
periods of interval $j$. For instance, the memory schedule in
Figure~\ref{fig:mem_part}(c) is $\mathcal{S} = \{(\mathcal{Q}, 2),
(\mathcal{Q}', 1), (\mathcal{Q}, 1), (\mathcal{Q}'', 2), (\mathcal{Q},
4)\}$.

\newcommand{\wspan}[2][]{{W^{#1}_{#2}}}

\par{\bf Workload Span:} the goal of this work is to compute the
maximum number of regulation periods required to execute the workload
under analysis to completion. This goes under the name of span, and is
defined below.

\begin{definition}[Span]
  We define \emph{span} as the number of regulation periods to entirely
  complete $E_i$ units of execution and $\mu_i$ memory transactions
  for the considered workload. The span is indicated throughout the
  paper with the symbol $\wspan{i}$.
\end{definition}

The workload, however, may span throughout a number of different memory
scheduling intervals $B^1, \ldots, B^N$. While the total span is
indicated with $\wspan{i}$, the span of the workload over each interval
$B^j$ is indicated with $\wspan[j]{i}$. It must hold that
$\sum_{j=1}^N \wspan[j]{i} = \wspan{i}$. 

Since the intervals have fixed length and are ordered, $\wspan[1]{i}$
must be equal to either $L^1$ or $\wspan{i}$, whichever is
shorter. Then assuming $\wspan{i} > L^1$, the span $\wspan[2]{i}$ over
interval $B^2$ must be equal to the minimum of $L^2$ and $\wspan{i} -
L^1$. In general, noticing that $\sum^{j-1}_{k = 1} L^k$ is the
cumulative length of intervals preceding $B^j$, the execution over
interval $B_j$ must thus be equal to:
\begin{equation} \label{eq:Wsplit}
  \wspan[j]{i} = \max\Big(0, \min\big(L^j, \wspan{i} - \sum^{j-1}_{k =
    1} L^k\big)\Big).
\end{equation}

\section{Memory Stall}
\label{sec:memstall}

A fundamental concept is the notion of \emph{memory stall}. In
general, a memory request originating from the core $i$ under analysis
can be ``stalled'' for two reasons. The first reason is that the
hardware memory arbiter has prioritized one or more other cores over
$i$ for access to the memory subsystem (memory interference). The
second reason is that the core under analysis has exhausted its budget
and is stalled until the beginning of the next regulation period.

\begin{figure*}[!htb]
    \begin{minipage}{0.50\textwidth}
	\centering
	\includegraphics[height=1.7in]{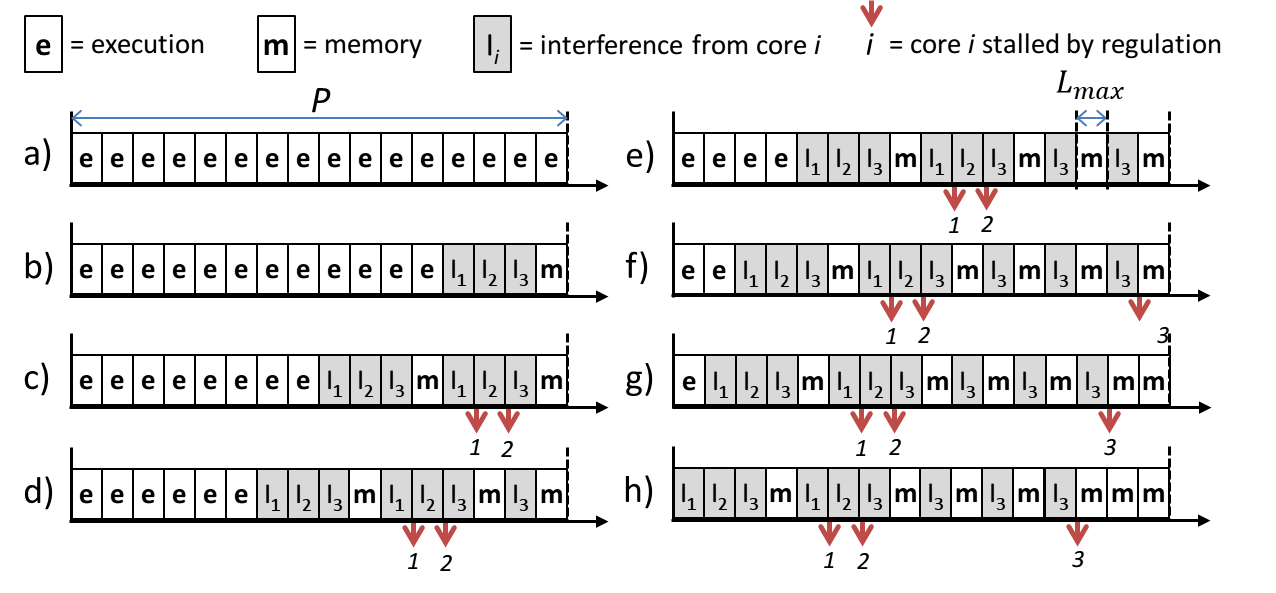}
	\caption{M/C configurations for core $i = 4$ under analysis in
          a 4-core system with $\mathcal{Q}=\{2, 2, 5, 7\}$.}
	\label{fig:mc_config}
    \end{minipage}
    \hspace{0.2in}
    \begin{minipage}{0.45\textwidth}
        \centering
        \includegraphics[height=1.7in]{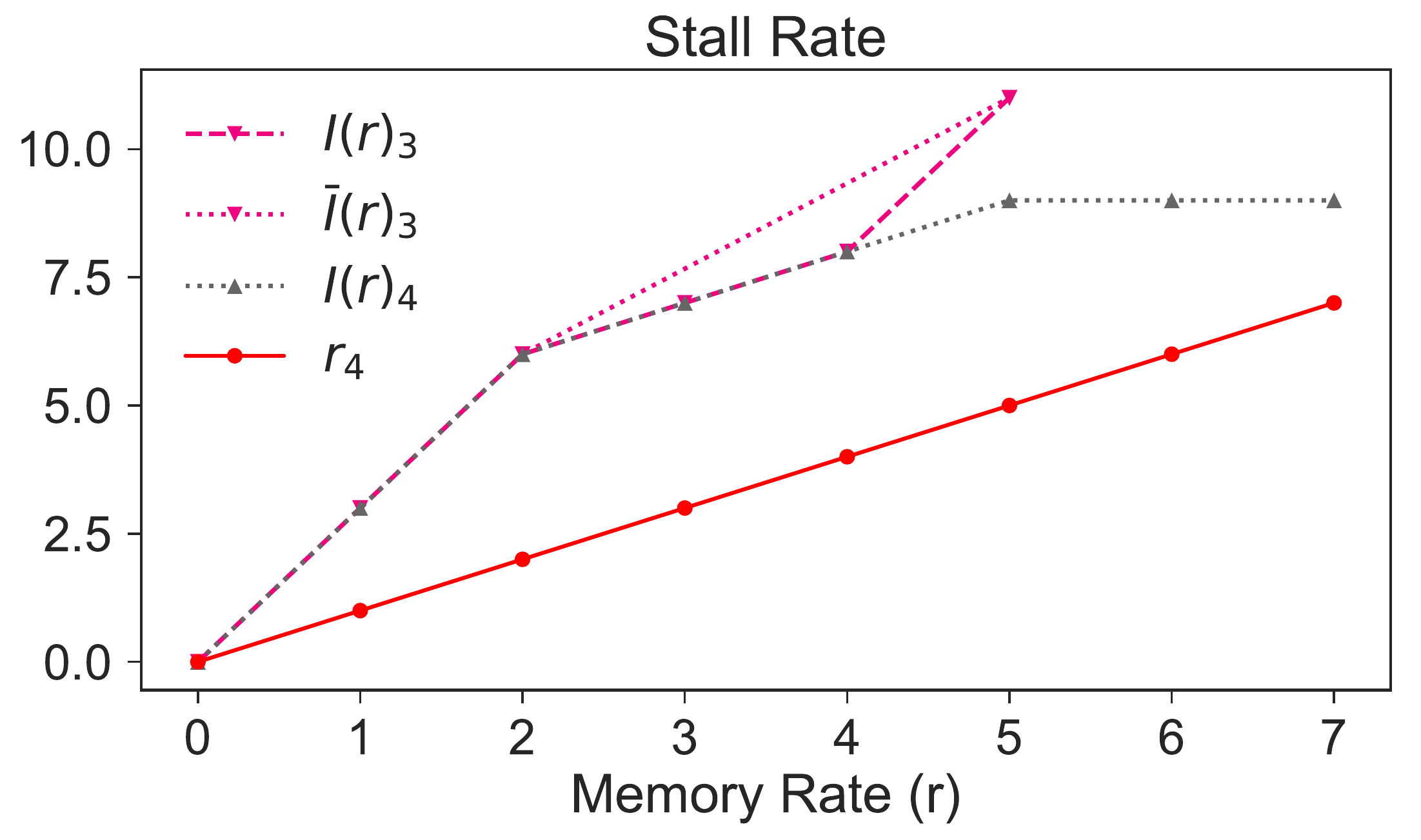}
        \caption{Plot of stall curves for a 4-core system with
          $\mathcal{Q}=\{2, 2, 5, 7\}$.}
        \label{fig:mc_config_stall}
    \end{minipage}
    \vspace{-0.5cm}
\end{figure*}

We make no assumption on the behavior of tasks in cores other than
core $i$. It follows that the maximum stall that can be suffered by a
memory transaction on core $i$ depends only on the number of memory
transactions performed by $i$ in the same regulation period. This is
exemplified in Figure~\ref{fig:mc_config}, where $i = 4$ and
$\mathcal{Q}=\{2, 2, 5, 7\}$. The budget for Core 4 is $\Q_4 = 7$. It
follows that there are 8 possible worst-case scenarios, denoted as
(a)-(h) in the figure. In general, there are always $\Q_i + 1$ possible
cases. In the figure, pure execution is represented with ``e'' and is
considered in slots of length $L_{max}$, as mentioned in
Section~\ref{sec:sys_model}.

The pattern in Figure~\ref{fig:mc_config}(a) depicts the worst-case
memory interference from other cores (cores 1 to 3) provided that core
4 performs zero memory accesses within the regulation period. A more
interesting case is Figure~\ref{fig:mc_config}(e). Here, Core 4
performs 4 memory accesses. For the first two memory accesses, since three cores (1, 2 and 3)
can cause stall, 3 units of stall are accumulated per memory
access. If we depict the stall as a curve, then the ``slope'' of the stall introduced by the first two
accesses is 3. After the first two accesses, cores 1 and 2 are
temporarily stopped due to regulation -- they have exhausted their
respective budgets. Core 3, however, can still cause stall on
transactions from Core 4 under analysis. Hence, the stall slope for the
$3^{rd}$ and $4^{th}$ transactions is 1.

\newcommand{\mrate}[2][]{{r^{#1}_{#2}}}
\newcommand{\muint}[2][]{{\mu^{#1}_{#2}}}

A more efficient way to visualize the possible stall scenarios is by
plotting the per-period memory transactions and resulting stall. The
$\Q_i + 1$ possible cases represent a discrete domain. A corresponding
continuous curve for the memory stall can be derived by ``connecting''
these discrete points. Call $\mrate{i} \in \mathbb{R}_{\ge 0}$ the
number of memory transactions per regulation period being
performed. We introduce the notion of \emph{memory rate}.

\begin{definition}[Memory rate]
  We define as \emph{memory rate} the number of memory transactions
  performed per regulation period. Memory rates are indicated
  throughout the paper with the symbol $\mrate{i}$.
\end{definition}

A memory rate is often used to indicate the rate at which a total
number of memory transactions $\mu_i$ is performed over the span of
the considered workload $\wspan{i}$. Hence, $r_i =
\frac{\mu_i}{\wspan{i}}$. We can also indicate the number of memory
transactions performed during a specific interval $B^j$ as
$\muint[j]{i}$. It must hold that $\sum_{j=1}^N \muint[j]{i} =
\muint{i}$. Analogously, we indicate the memory rate over each
interval as $\mrate[j]{i}$. By definition, we have $\mrate[j]{i} =
\frac{\muint[j]{i}}{\wspan[j]{i}}$.

\vspace{-0.4cm}
\subsection{Memory-stall Curves}
\label{sec:stall_curve}

The memory-stall curve for a core $i$ represents the cumulative
maximum interference-induced stall for a given memory rate $r$ and is
denoted as $I(r)_i$ . Consider same setup used for
Figure~\ref{fig:mc_config}. The memory-stall curves for cores 3 and 4
are provided in Figure~\ref{fig:mc_config_stall}. Considering Core 4,
We have already discussed how the first two transactions introduce
stall at a ``slope'' of 3. This is reflected in the $I(r)_4$ curve,
since the curve has slope 3 when $r \in~]0, 2[$.

For clarity, let us construct the memory-stall curve for core $3$. The
$y$-axis represents the cumulative maximum stall $I(r)_3$ that can be
experienced by workload on Core 3 with a memory rate $r$
($x$-axis). Workload on Core 3 can perform of 0, 1, 2, 3, 4 or 5
memory transactions in a regulation period. The first step is to
compute the maximum stall in each of these cases. If the workload does
not perform any memory transaction ($r = 0$) in a regulation period,
then it will experience no stall, i.e. $I(0)_3 = 0$.  When $r = 1$,
then it can be stalled by a maximum of $1$ memory transaction by each
of the $m-1 = 3$ cores resulting in $I(1)_3 = 3$. Similarly, for all
values of $r$ until $r = \min_i(\Q_1, \ldots, \Q_m)$, the maximum
stall rate $I(r)_3 = (m-1) \cdot r$, hence $I(2)_3 = 3 \cdot 2 =
6$. When Core 3 performs an additional memory transaction, i.e. $r=3$,
it can only be stalled by Core 4, since cores 1 and 2 have been
regulated after their second memory access. Thus, the cumulative stall
rate is $I(3)_3 = I(2)_3 + 1 \cdot 1 = 7$. Similarly, for $r=4$,
$I(4)_3 = I(3)_3 + 1 \cdot 1 = 8$. Finally, for $r=5=\Q_3$, Core 3 is
regulated.  Here the maximum cumulative stall is $I(5)_3 = Q - \Q_3 =
16 - 5 = 11$. The memory-stall curve $I(r)_3$ is obtained by
connecting the discrete values of $I(k)_3, k \in \{0, \ldots, 5\}$
calculated so far.

Generalizing the example provided above, for any fixed budget
$\mathcal{Q}$ we can define the stall curve $I(r)_i$ as follows:
\begin{align} 
  I(r)_i = 
  \begin{cases}
    \sum_{k \neq i} \min(r, \Q_k) &\text{if}~ r < \Q_i\\
    Q - \Q_i &\text{if}~ r = \Q_i
  \end{cases}
  \label{eq:stall_curve_j}
\end{align}

Since the budget assignment $\mathcal{Q}^j$ changes every scheduling
interval $B^j$, a different $I(r)^j_i$ curve needs to be considered on
each interval.

If the resulting curve is concave, then the memory-stall curve is
already final. This is the case for $I(r)_4$ in
Figure~\ref{fig:mc_config_stall}. Conversely, a refinement step is
necessary to produce the final curve. Specifically, we take the
upper-envelope of each of the convex segments to obtain a concave
curve.  The result of this step is depicted as $\Iconc(r)_3$ in
Figure~\ref{fig:mc_config_stall}.

\begin{definition}[Stall rate]
  We define as \emph{stall rate} the amount of memory stall
  $\Iconc(r)_i$ suffered per regulation period with a memory rate
  $r$. When considering multiple intervals, $\Iconc(r)^j_i$ is the
  stall rate for core $i$ on interval $B^j$.
\end{definition}

\newcommand{\mstall}[2][]{{S^{#1}_{#2}}}

If the span over $B^j$ is $\wspan[j]{i}$ and $\muint[j]{i}$
transactions are performed in the interval, we can compute the
worst-case total stall $\mstall[j]{i}$ over $B^j$ as:
\begin{equation}
 \mstall[j]{i} := \Iconc(\mrate[j]{i})_i \cdot \wspan[j]{i} =
 \Iconc\Big(\frac{\muint[j]{i}}{\wspan[j]{i}}\Big)_i \cdot \wspan[j]{i}.
 \label{eq:mstall}
\end{equation} 

It follows that the total stall is $\mstall{i} = \sum_{j=1}^N
\mstall[j]{i}$. If the maximum memory stall that can be suffered by
the workload under analysis can be derived, then the worst-case amount
of time (in multiples of $L_{max}$) required to complete the
considered workload is $\wspan{i} \cdot Q = \beta_i + \mstall{i}$. The
rest of the paper is concerned with the calculation of the maximum
total stall, and hence span, over a generic memory schedule
$\mathcal{S} = \{B^1, \ldots, B^N\}$.

For a fixed budget $\mathcal{Q}$, a given memory rate $\mrate{i}$ and
span $\wspan{i}$, Lemma~\ref{lem:stall_ub} guarantees that computing
$\mstall{i}$ according to Equation~\ref{eq:mstall} always results in
an upper-bound on the maximum possible memory stall.

\begin{lemma}
  \label{lem:stall_ub}
  $\mstall{i} = \Iconc(\muint{i}/\wspan{i})_i \cdot \wspan{i}$ is an
  upper bound to the cumulative stall suffered by a workload on core
  $i$ that performs $\muint{i}$ memory accesses over $\wspan{i}$
  regulation periods, with $\muint{i}~\leq~\wspan{i}~\cdot \Q_i$.
 \end{lemma}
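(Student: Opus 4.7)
The plan is to decompose the total stall period-by-period and then apply a concavity (Jensen-style) inequality using the fact that $\Iconc(r)_i$ is, by construction, the concave upper envelope of the discrete stall values $I(k)_i$ for $k \in \{0, \dots, \Q_i\}$.

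First, I would write the true worst-case cumulative stall as a sum over the $\wspan{i}$ regulation periods. Let $\mu^{(k)} \in \{0, 1, \dots, \Q_i\}$ denote the number of memory transactions performed by the workload on core $i$ during the $k$-th regulation period, with $\sum_{k=1}^{\wspan{i}} \mu^{(k)} = \muint{i}$. Because cores other than $i$ are unconstrained, the worst-case stall accumulated in regulation period $k$ is precisely $I(\mu^{(k)})_i$ as defined in Equation~\eqref{eq:stall_curve_j}, and there is no carry-over of stall between periods (a period ends at a replenishment boundary). Hence the true worst-case total stall is bounded by $\sum_{k=1}^{\wspan{i}} I(\mu^{(k)})_i$, maximized over all integer-valued allocations $\mu^{(k)}$ summing to $\muint{i}$.

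Second, I would use the upper-envelope property: $I(k)_i \le \Iconc(k)_i$ for every integer $k \in \{0,\dots,\Q_i\}$, since $\Iconc$ is obtained by taking the upper envelope over each convex segment of $I$. This gives $\sum_k I(\mu^{(k)})_i \le \sum_k \Iconc(\mu^{(k)})_i$. Third, I would invoke concavity of $\Iconc$: Jensen's inequality yields
\begin{equation*}
\frac{1}{\wspan{i}} \sum_{k=1}^{\wspan{i}} \Iconc(\mu^{(k)})_i \;\le\; \Iconc\!\left( \frac{1}{\wspan{i}} \sum_{k=1}^{\wspan{i}} \mu^{(k)} \right)_{\!i} \;=\; \Iconc\!\left( \frac{\muint{i}}{\wspan{i}} \right)_{\!i}.
\end{equation*}
Multiplying by $\wspan{i}$ and chaining with the previous bound gives $\sum_k I(\mu^{(k)})_i \le \Iconc(\muint{i}/\wspan{i})_i \cdot \wspan{i} = \mstall{i}$, as desired. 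The feasibility constraint $\muint{i} \le \wspan{i}\cdot \Q_i$ ensures the argument $\muint{i}/\wspan{i}$ lies in $[0,\Q_i]$, where $\Iconc$ is defined.

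The main obstacle I anticipate is purely presentational rather than technical: one must carefully justify that (i) the per-period decomposition is sound (no cross-period interference carry-over, which follows from budget replenishment at period boundaries and the round-robin arbiter), and (ii) that $\Iconc$ is indeed concave on $[0,\Q_i]$ so that Jensen's inequality applies. Both facts were asserted in Section~\ref{sec:stall_curve} during the construction of $\Iconc$, so the proof mainly has to tie these together cleanly. A minor subtlety is that the maximizer of the discrete sum over integer $\mu^{(k)}$ values might not coincide with the real-valued optimum $\muint{i}/\wspan{i}$ in each period, but concavity of $\Iconc$ is precisely what allows us to bound the discrete optimum by the fractional one, which is the whole point of taking the upper envelope.
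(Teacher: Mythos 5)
Your proposal is correct and follows essentially the same route as the paper's proof: bound the per-period stall by $I(\cdot)_i \le \Iconc(\cdot)_i$ and then apply Jensen's inequality to the concave envelope; the paper merely re-indexes the sum by grouping periods according to the number of accesses performed (via counts $a_k$) instead of summing over periods directly, which is an equivalent bookkeeping choice.
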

  
\begin{IEEEproof}
    In each of the $\wspan{i}$ regulation periods, a number of memory
    accesses between $0$ and $\Q_i$ could have been performed; hence,
    note that we cannot have $\muint{i} > \wspan{i} \cdot \Q_i$. Let us
    indicate with $a_k$ the number of periods in which $k$ memory
    accesses were performed. It must hold that $\sum^{\Q_i}_{k=0} a_k =
    \wspan{i}$. We can then write:
    \begin{equation}
      \muint{i} = a_0 \cdot 0 + a_1 \cdot 1 + \ldots + a_{\Q_i} \cdot \Q_i.
    \end{equation}    
    The cumulative stall suffered over $\wspan{i}$ can be computed as:
    \begin{equation}
      \label{eq:stall-rate}
      a_0 \cdot I(0)_i + a_1 \cdot I(1)_i + \ldots + a_{\Q_i} \cdot I(\Q_i)= \sum^{\Q_i}_{k=0} I(k)_i \cdot a_k.
    \end{equation}
    Consider now computing the stall rate as $\Iconc(\muint{i}/\wspan{i})_i$. From
    Equation~\ref{eq:stall-rate}, by $I(r) \le \Iconc(r)$ we have:
    \begin{equation}
     \sum^{\Q_i}_{k=0} I(k)_i \cdot a_k \le \sum^{\Q_i}_{k=0} \Iconc(k)_i \cdot a_k
    \end{equation}
    Next recall that by definition of concavity for a generic function $f(x)$, it must hold that:    
    \begin{equation}
      \lambda_k \in \mathbb{R}~~\text{s.t.}~\sum_k \lambda_k = 1 \implies f\Big(\sum_k x_k \lambda_k\Big) \ge \sum_k f(x_k)\lambda_k
    \end{equation}
    Note that:
    \begin{equation}
      \frac{\sum^{\Q_i}_{k=0} a_k}{\wspan{i}} = \sum^{\Q_i}_{k=0} \frac{a_k}{\sum^{\Q_i}_{k=0} a_k} = 1.
    \end{equation}
    Hence, we can write:
    \begin{equation}
      \footnotesize
      \frac{\sum^{\Q_i}_{k=0} \Iconc(k)_i \cdot a_k}{\wspan{i}} = \sum^{\Q_i}_{k=0}\frac{ \Iconc(k)_i \cdot a_k}{\sum^{\Q_i}_{k=0} a_k} 
      \le \Iconc\Bigg(\sum^{\Q_i}_{k=0}\frac{k \cdot a_k}{\sum^{\Q_i}_{k=0} a_k}\Bigg)_i = \Iconc\Big(\frac{\muint{i}}{\wspan{i}}\Big)_i.
    \end{equation}
This implies that $\Iconc(\muint{i} / \wspan{i})_i \cdot \wspan{i}$ is
an upper bound to the cumulative stall $\sum^{\Q_i}_{k=0} I(k)_i \cdot
a_k$ suffered by the workload for \emph{any} pattern of memory
accesses over $\wspan{i}$ periods, concluding the proof.
\end{IEEEproof}

\section{WCET under Static Memory Budget}
\label{sec:wcet_1_bud}

In this section, we present a fixed-point iterative algorithm to compute the
worst-case length of the workload on a core under analysis $i$ under
static memory budget $\mathcal{Q}$. This is useful to understand the
basic mechanisms to compute the span over a generic single memory
scheduling interval. 

In each iteration, the algorithm recomputes the maximum stall and thereby, the 
workload span, based on the workload span from the previous iteration (except 
for the base iteration) and the corresponding memory schedule. The key 
intuition behind iterative recomputation is that the increase in workload span 
in an iteration is likely to increase the maximum stall in the consecutive 
iteration due to a different worst-case distribution of memory requests across 
\textit{(a)} different per memory-stall curves and/or \textit{(b)} different 
memory scheduling intervals.

In the rest of the paper, we will always focus on the generic core
under analysis. As such, we will drop the index $i$ from all the
notation introduced so far, unless required to resolve an
ambiguity. Since we will be introducing a series of iterations of the algorithm, we subscript the iteration number (e.g., $(k)$) in the
notation introduced so far.
	
\par{\bf Iterative Algorithm:}
\label{sec:wcet_1_bud_alg}
the span $W$ over a static memory budget can be computed using
Equation~\ref{eq:iteration_static}.
\begin{align}
\wspan{(0)} & \gets \lceil \emusum / Q \rceil, \nonumber \\
\wspan{(k)} & \gets \Big\lceil \big( \emusum + \Iconc(\min(\muint{} / \wspan{(k-1)}), \Q) \cdot \wspan{(k-1)} \big) / Q \Big\rceil, \label{eq:iteration_static}
\end{align}
where the iteration continues until convergence with $\wspan{(k)} =
\wspan{(k-1)}$, or until $\wspan{(k)} \cdot Q \cdot L_{max} > D$. In
the latter case, the workload in not schedulable. Since $\Iconc(r)$ is
only defined for $r \in [0, q]$, the term $\Iconc(\min(\muint{} /
\wspan{(k-1)}), \Q)$ ensures that the function is never evaluated on a
value outside its domain.

\begin{theorem}\label{theorem1}
  \label{thm:wcet_single}
  The iteration in Equation~\ref{eq:iteration_static} terminates in a
  finite number of steps by either obtaining a value $\wspan{(k)}
  \cdot Q \cdot L_{max} > D$, or by converging, in which case
  $\wspan{(k)}$ is an upper bound on the span of the workload on the
  core under analysis.
\end{theorem}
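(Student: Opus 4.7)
The plan is to decompose the argument into three stages: monotonicity of the update map, finite termination, and validity of the converged value as an upper bound on the span via Lemma~\ref{lem:stall_ub}.

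First, I would show that the update map $g(W) := \lceil (\beta + \bar{I}(\min(\mu/W, q)) \cdot W)/Q \rceil$ is non-decreasing in $W$. The crux is that $W \cdot \bar{I}(\min(\mu/W, q))$ itself is non-decreasing in $W$: in the regime $\mu/W > q$ it simplifies to $(Q - q)\,W$, which is plainly non-decreasing, while in the regime $\mu/W \leq q$ I would use concavity of $\bar{I}$ together with $\bar{I}(0)=0$ to argue that the secant slope $\bar{I}(r)/r$ is non-increasing in $r$, so $W \cdot \bar{I}(\mu/W) = \mu \cdot \bar{I}(\mu/W)/(\mu/W)$ grows as $W$ grows and $\mu/W$ shrinks. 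The two regimes agree at $\mu/W = q$. Monotonicity of $g$ then yields $W_{(k)} \geq W_{(k-1)}$ by induction on $k$, with the base case $W_{(1)} \geq W_{(0)}$ holding because $\bar{I} \geq 0$.

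Second, termination follows easily. The sequence $\{W_{(k)}\}$ consists of non-decreasing positive integers. If some iterate satisfies $W_{(k)} \cdot Q \cdot L_{max} > D$, the algorithm halts and declares the workload infeasible. Otherwise $W_{(k)} \leq \lfloor D/(Q\,L_{max})\rfloor$ for all $k$, and any non-decreasing integer sequence in a bounded range must stabilize after finitely many steps, giving convergence $W_{(k)} = W_{(k-1)}$.

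Third, for the upper-bound property at convergence, let $W$ denote the converged value, so $W\,Q \geq \beta + \bar{I}(\min(\mu/W, q)) \cdot W$. I would first rule out the case $\mu/W > q$: otherwise $\bar{I}(\min(\mu/W,q)) = \bar{I}(q) = Q - q$ and the inequality would give $q\,W \geq \beta \geq \mu$, contradicting $\mu > q\,W$. Hence $W\,Q \geq \beta + \bar{I}(\mu/W) \cdot W$. For any admissible execution, Lemma~\ref{lem:stall_ub} bounds the cumulative memory-induced stall incurred in $W$ periods by $\bar{I}(\mu/W) \cdot W$, so the total number of $L_{max}$-slots needed for execution, memory accesses, and stall is at most $\beta + \bar{I}(\mu/W) \cdot W \leq W\,Q$. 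The workload therefore completes within $W$ regulation periods, proving that $W$ upper-bounds the span. The main obstacle is the monotonicity step: it rests on the non-obvious observation that concavity of $\bar{I}$ through the origin makes $W \cdot \bar{I}(\mu/W)$ non-decreasing in $W$ even though $\bar{I}(\mu/W)$ itself is non-increasing in $W$; once that fact is in hand, termination and the upper-bound property reduce to elementary bookkeeping and a direct appeal to Lemma~\ref{lem:stall_ub}.
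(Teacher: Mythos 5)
Your proposal is correct, and every step checks out: the two-regime monotonicity of $W \cdot \Iconc(\min(\mu/W,\Q))$ (linear growth $(Q-\Q)W$ in the saturated regime, and the secant-slope argument from concavity of $\Iconc$ through the origin in the unsaturated regime, with the pieces agreeing at $\mu/W=\Q$), termination by integrality and boundedness below the deadline, the exclusion of the saturated case at the fixed point via $\Q W \geq \emusum \geq \mu$, and the final appeal to Lemma~\ref{lem:stall_ub}, whose hypothesis $\mu \leq W\cdot\Q$ is exactly what the exclusion step supplies. The difference from the paper is one of packaging rather than substance: the paper's official proof of Theorem~\ref{thm:wcet_single} is a reduction argument (in the Appendix) showing that Equation~\ref{eq:iteration_static} is the $N=1$, single-unbounded-interval instance of Equation~\ref{eq:iteration_dynamic}, so that the theorem follows as a corollary of Theorem~\ref{thm:length_dynamic}; the three ingredients you use are precisely the ones deployed there (Lemma~\ref{lm:monotonicI} for monotonicity, integrality for termination, a contradiction ruling out full saturation at convergence, and Lemma~\ref{lem:stall_ub} for the bound). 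What your direct route buys is a self-contained proof of the static case that does not require first developing the multi-interval optimization machinery; what the paper's route buys is the absence of duplicated argument, since the dynamic theorem must be proved anyway. The only point I would urge you to make explicit in a write-up is the one you flag yourself: that $W\cdot\Iconc(\mu/W)$ is non-decreasing in $W$ even though $\Iconc(\mu/W)$ is non-increasing — this is the content of Lemma~\ref{lm:monotonicI} and is the load-bearing step.
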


\begin{IEEEproof}[Proof Sketch]  Notice that we omit the proof for Theorem~\ref{theorem1} here, as it is 
  a corollary of the more general Theorem~\ref{thm:length_dynamic}. As such, 
  the proof is provided in the Appendix.
\end{IEEEproof}

For ease of explanation, Section~\ref{sec:wcet_1_bud_illus}
illustrates how to apply the algorithm in a specific
instance. Subsequently, Section~\ref{sec:wcet_n_bud} presents the
generic algorithm.

\par{\bf Example of WCET over Static Budget:}
\label{sec:wcet_1_bud_illus}
consider the static budget $\mathcal{Q}=\{2,2,5,7\}$. Let us now
compute the span $W$ of the workload with $E=40$ and $\muint{}=35$
(i.e. $\beta = 75$) executing on Core $3$. For simplicity, we ignore
the workload's deadline $D$ and focus only on its length. Since
workload on Core $3$ is being analyzed, we consider the stall curve
$\Iconc(r)_3$ in Figure~\ref{fig:mc_config_stall}, with $Q = 16$ and
$q = 5$.

The first step in the iterative Equation~\ref{eq:iteration_static} is
$\wspan{(0)} = \ceil{\emusum/Q} = \ceil{75/16} = 5$. We then have:
\begin{align*}
  \wspan{(1)} &= \ceil{(75 + \Iconc(\min(35/5, 5)\cdot 5)/16} \\
  &= \ceil{(75 + \Iconc(5)\cdot 5)/16} = \ceil{(75 + 11 \cdot 5)/16} = 9
\end{align*}
Since $\wspan{0} \neq \wspan{k}$, an additional iteration needs to be
performed. We have:
\begin{align*}
  \wspan{(2)} &= \ceil{(75 + \Iconc(\min(35/9, 5) \cdot 9)/16} \\
  &= \ceil{(75 + \Iconc(3.88)\cdot 9)/16} = \ceil{(75 + 9 \cdot 9)/16} = 10
\end{align*}
No convergence has been reached yet, so one more iteration is
performed:
\begin{align*}
  \wspan{(3)} &= \ceil{(75 + \Iconc(\min(35/10, 5) \cdot 10)/16} \\
  &= \ceil{(75 + \Iconc(3.5)\cdot 10)/16} = \ceil{(75 + 8.5 \cdot 10)/16} = 10
\end{align*}
Since $\wspan{(3)} = \wspan{(2)}$, convergence has been reached and
the worst-case length (in multiples of $L_{max}$) for the workload
under analysis can be computed as $\wspan{(3)} \cdot Q = 160$.

\section{WCET under Dynamic Memory Budget}
\label{sec:wcet_n_bud}

\begin{figure*}
  \centering \includegraphics[width=0.9\textwidth]{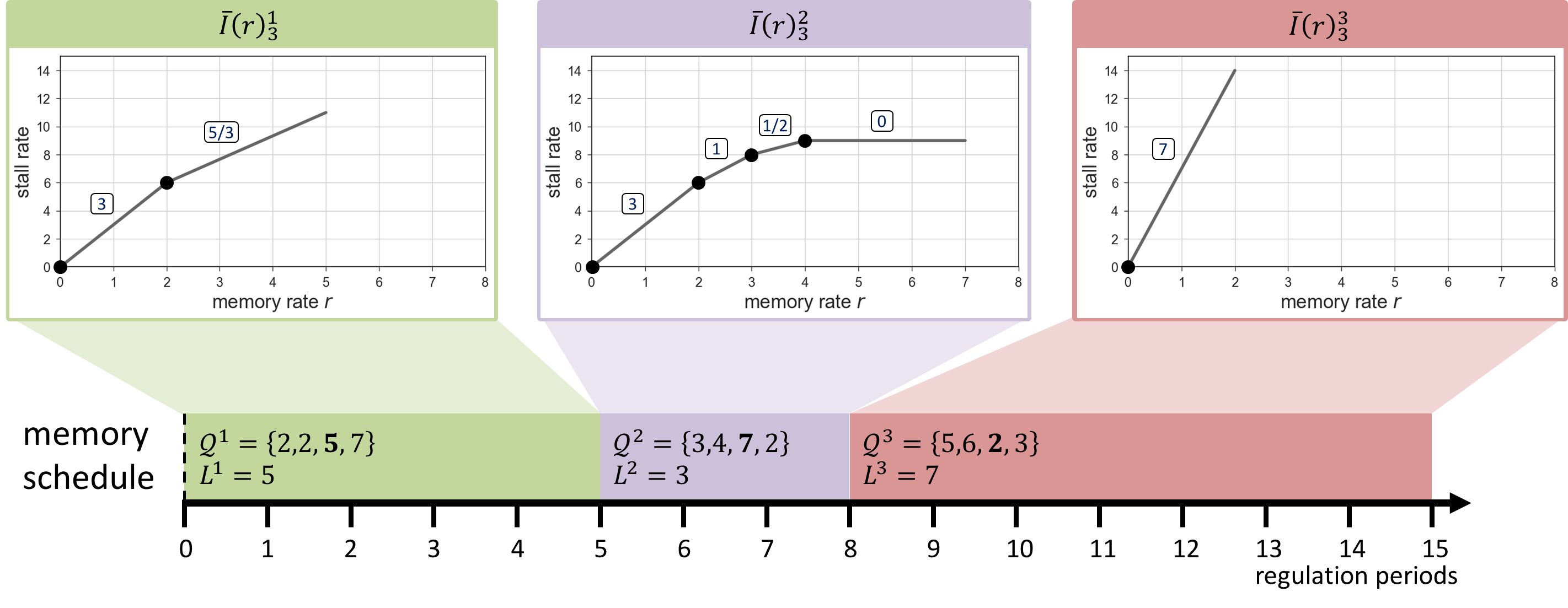}
  \vspace{-0.3cm}
  \caption{Example of memory schedule of length $15$ regulation
    periods composed of 3 intervals $B^1, B^2, B^3$ of length 5, 3,
    and 7, respectively. Considering Core 3, the curves
    $\Iconc(r)^1_3, \Iconc(r)^2_3, \Iconc(r)^3_3$ are reported above
    each interval. For each $\Iconc(r)^j_3$ curve, segment start
    points are highlights with a $\bullet$ and slopes are annotated
    above the segments.}
  \label{fig:mult_budget}
  \vspace{-0.5cm}
\end{figure*}

In this section, we extend our analysis to the case of dynamic
bandwidth assignment. In this case, the workload could span across one
or more memory scheduling intervals $B^1, \dots, B^N$. Recall from
Section~\ref{sec:background_gen} that each interval $B^j =
(\mathcal{Q}^j, L^j)$ is characterized by a budget-to-cores assignment
$\mathcal{Q}^j = \{\Q^j_1, \ldots, \Q^j_m\}$ and a length $L^j$
expressed in number of regulation periods. For the core under
analysis, it is always possible to compute $\Iconc(r)^j$ using
Equation~\ref{eq:stall_curve_j}, i.e. the stall curve resulting from
$\mathcal{Q}^j$. 

Similarly to Equation~\ref{eq:iteration_static}, we follow an
iterative approach. Let once again $\wspan{(k)}$ be the span of the
considered workload at iteration $k$. It is always possible to
determine the span of the workload $\wspan[j]{(k)}$ in each of the
intervals $B^j$ using Equation~\ref{eq:Wsplit}.

\textbf{Example:} To better understand this setup, consider the
situation depicted in Figure~\ref{fig:mult_budget}. In this case, the
memory schedule is composed of 3 intervals $B^1, B^2, B^3$. The
intervals have length $L^1 = 5, L^2 = 3$, and $L^3 = 7$,
respectively. Similarly we have three budget-to-cores assignments
$\mathcal{Q}^1, \mathcal{Q}^2, \mathcal{Q}^3$ with the budgets values
specified in the figure. Assume that we want to analyze the behavior
of workload on Core $i = 3$: the figure reports the $\Iconc(r)^1_3,
\Iconc(r)^2_3, \Iconc(r)^3_3$ curves that need to be considered in
each of the intervals. Let us assume that at a certain iteration $k$,
we have $\wspan{(k)} = 9$ and apply Equation~\ref{eq:Wsplit}. We
obtain $\wspan[1]{(k)} = \max\big(0, \min(5,9) \big) = 5,
\wspan[2]{(k)} = \max\big(0, \min(3,9 - 5) \big) = 3$, and
$\wspan[3]{(k)} = \max\big(0, \min(7,9 - 8 ) \big) = 1$. Note that it
always hold $\sum_{j=1}^N \wspan[j]{(k)} = \wspan{(k)}$.

Assume that the span $\wspan{(k)}$ and hence the various
$\wspan[j]{(k)}$ terms at a given iteration $k$ are known. Then, the
challenge is to determine how to distribute the total $\muint{}$
memory transactions among the $B^1, \ldots, B^N$ intervals in a way
that maximizes the overall stall. A \emph{distribution} of memory
transactions simply means that we derive the quantities
$\muint[j]{(k)}$ for each $B^j$ interval. Obviously, it must hold that
$\sum_{j=1}^N \muint[j]{(k)} = \muint{}$. But among all the possible,
valid distributions, we are interested in the one\footnote{This
  distribution may not be unique.} that maximizes the overall stall,
i.e. $\mstall{(k)} = \sum_{j=1}^N \mstall[j]{(k)}$, where each of the
$\mstall[j]{(k)}$ terms is defined as in Equation~\ref{eq:mstall}.

To simplify exposition, we first introduce an optimization problem in
Algorithm~\ref{alg:opt} that computes the number $\muint[j]{(k)}$ of
memory requests assigned to each interval $B^j$ at the $k$-th
iteration to maximize the overall stall $\mstall{(k)}$. Then, in
Section~\ref{sec:algorithm} we show how to efficiently solve the
optimization problem. Note that once $\muint[j]{(k)}$ has been determined,
based on Lemma~\ref{lem:stall_ub} the stall in interval $B^j$ can be
upper-bounded as $\mstall[j]{(k)} = \Iconc(\muint[j]{(k)} /
\wspan[j]{(k)})^j \cdot \wspan[j]{(k)}$ . Hence, the cumulative stall
at Line~\ref{alg:opt_objective} of the algorithm is computed as
$\mstall{(k)} = \sum_{j=1}^N \mstall[j]{(k)}$.  Due to regulation, in
Line~\ref{alg:WQ} we assign at most $\Q^j$ memory transactions in any
regulation period inside interval $B^j$. This is equivalent to the
following constraint: the number of transactions $\muint[j]{(k)}$
performed in $B^j$ is upper bounded by $\wspan[j]{(k)} \cdot
\Q^j$. Finally, since we know that the workload comprises at most
$\mu$ requests, it must hold $\sum_{j =1}^N \muint[j]{(k)} \leq \mu$
at Line~\ref{alg:split_memory}.

Based on Algorithm~\ref{alg:opt},
$\wspan[j]{(k)}$ is then computed according to the following
iteration:
\begin{align}
\wspan{(0)}& \gets \lceil \emusum / Q \rceil, \nonumber \\
\wspan{(k)} & \gets \Big\lceil \big( \emusum + \sum_{j=1}^N \Iconc(\muint[j]{(k-1)} / \wspan[j]{(k-1)})^j  \cdot \wspan[j]{(k-1)} \big) / Q \Big\rceil. \label{eq:iteration_dynamic}
\end{align}

Note that at each iteration $k > 0$, the values $\wspan[j]{(k-1)}$ are
computed using Equation~\ref{eq:Wsplit} from $\wspan{(k-1)}$, and the
values $\muint[j]{(k-1)}$ are computed using
Algorithm~\ref{alg:opt}. As in Section~\ref{sec:wcet_1_bud}, the
iteration continues until convergence or $\wspan{(k)} \cdot Q \cdot
L_{max} > D$.

\lstset{
  morecomment=[s][\color{red}\scriptsize\itshape]{/*}{*/},
  mathescape=true,
    columns=fullflexible,
  keepspaces=true,
  xleftmargin=0.1\linewidth,
  xrightmargin=0.05\linewidth,
  float=tp,
  floatplacement=tbp,
}

\begin{framed}
\begin{lstlisting}[caption={Stall maximization over multiple intervals}, 
label=alg:opt, abovecaptionskip=-\medskipamount, belowcaptionskip=0.2cm, 
mathescape=true, numbers=left, 
numberstyle=\scriptsize,basicstyle=\footnotesize,escapeinside={(*@}{@*)}]
Input: $B^1, \ldots, B^N$ (*@\hfill@*)/* sequence of intervals */
Input: $\wspan[1]{(k)}, \ldots, \wspan[N]{(k)}$ (*@\hspace*{\fill}@*)/* span in each interval */
Input: $\mu$ (*@\hfill@*)/* total memory requests */

Output: $\muint[1]{(k)}, \ldots, \muint[N]{(k)}$ (*@\hfill@*)/* memory requests in each interval */

Maximize: (*@\label{alg:opt_objective_capt}@*) (*@\hfill@*)/* max cumulative stall */
   $\mstall{(k)} = \sum_{j=1}^N \mstall[j]{(k)} = \sum_{j=1}^N 
   \Iconc(\muint[j]{(k)} / \wspan[j]{(k)})^j  \cdot \wspan[j]{(k)}$ (*@\label{alg:opt_objective}@*)
   
Subject to:
   $\muint[j]{(k)} \in \mathbb{N}$ (*@\label{alg:natural}@*) (*@\hfill@*)/*    number of requests is natural */
   $\muint[j]{(k)} \leq \wspan[j]{(k)} \cdot \Q^j$ (*@\label{alg:WQ}@*) (*@\hfill@*)/* max $\Q^j$ transactions per regulation period */
   $\sum_{j =1}^N \muint[j]{(k)} \leq \mu$ (*@\label{alg:split_memory}@*) (*@\hfill@*)/* total requests constraint */
\end{lstlisting}
\end{framed}

\textbf{Example:} Suppose we are analyzing the behavior of workload
with $E = 15$ and $\muint{} = 25$ on Core 3 under the memory schedule
depicted in Figure~\ref{fig:mult_budget}. Assume that at a given step
$k$ we have $\wspan{(k)} = 6$. In this case we have $\wspan[1]{(k)} =
5, \wspan[2]{(k)} = 1$ and $\wspan[3]{(k)} = 0$. Invoking
Algorithm~\ref{alg:opt} returns the memory-to-intervals distribution
that maximizes the overall stall, in this case: $\muint[1]{(k)} = 22,
\muint[2]{(k)} = 3, \muint[3]{(k)} = 0$. The stall per interval can be
computed as $\mstall[j]{(k)} =
\Iconc\big({\muint[j]{(k)}/\wspan[j]{(k)}}\big) \cdot
\wspan[j]{(k)}$. For this example, we have $\mstall[1]{(k)} = 50,
\mstall[2]{(k)} = 8$ and $\mstall[3]{(k)} = 0$. The new value of
$\wspan{(k+1)}$ can then be computed as $\wspan{(k+1)} = \ceil{(35 +
  50 + 8)/16} = 6$. Note that in this case
Equation~\ref{eq:iteration_dynamic} has reached convergence.

\subsection{Proof of Correctness} \label{sec:dynamic_proofs}

We now formally prove that Equation~\ref{eq:iteration_dynamic}
computes a valid upper bound for the workload length in number of
regulation periods. We begin with some helper lemmas;
Lemma~\ref{lm:dynamic_monotonic} show that the value of $\wspan{(k)}$
increases monotonically, which is required for the iteration to
terminate, while Lemma~\ref{lm:equality} shows that if the iteration
converges, we are able to distribute all $\mu$ memory requests among
the $N$ memory scheduling intervals.

\begin{lemma}\label{lm:dynamic_monotonic}
  At each iteration step in Equation~\ref{eq:iteration_dynamic} it
  holds: $\wspan{(k)} \geq \wspan{(k-1)} > 0$.
\end{lemma}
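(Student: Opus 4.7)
The plan is to establish both inequalities by induction on $k$, using monotonicity of the per-interval spans in the total span and a standard property of concave stall curves.

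First I would handle positivity and the base case. Since the workload satisfies $E > 0$, we have $\emusum = E + \mu > 0$, so $\wspan{(0)} = \lceil \emusum/Q \rceil \geq 1$. For $k = 1$, each term $\Iconc(\muint[j]{(0)}/\wspan[j]{(0)})^j \cdot \wspan[j]{(0)}$ in the sum on the RHS of Equation~\ref{eq:iteration_dynamic} is non-negative (treating the $\wspan[j]{} = 0$ case as contributing zero, consistent with the constraint $\muint[j]{} \leq \wspan[j]{} \cdot \Q^j$ forcing $\muint[j]{} = 0$ there). Hence $\wspan{(1)} \geq \lceil \emusum / Q \rceil = \wspan{(0)} > 0$, giving the base case of the induction.

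For the inductive step, I would assume $\wspan{(k-1)} \geq \wspan{(k-2)} > 0$ and show $\wspan{(k)} \geq \wspan{(k-1)}$. It suffices to prove that the optimal cumulative stall returned by Algorithm~\ref{alg:opt}, viewed as a function of the total span $W$ (with the per-interval spans $\wspan[j]{}$ derived from $W$ via Equation~\ref{eq:Wsplit}), is monotonic non-decreasing. This rests on two observations. First, by inspection of Equation~\ref{eq:Wsplit}, each $\wspan[j]{}$ is a non-decreasing function of $W$, so as $W$ grows from $\wspan{(k-2)}$ to $\wspan{(k-1)}$, every $\wspan[j]{}$ either stays the same or grows. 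Second, for any fixed $\muint[j]{}$, the quantity $\Iconc(\muint[j]{}/\wspan[j]{})^j \cdot \wspan[j]{}$ is non-decreasing in $\wspan[j]{} > 0$; this is the classical fact that for a concave function $f$ with $f(0)=0$, the map $x \mapsto x \cdot f(c/x)$ is non-decreasing (the lemma referenced in the proof sketch of Theorem~\ref{thm:wcet_single} that is stated in the Appendix).

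Combining the two observations: the optimal memory distribution $\{\muint[j]{(k-2)}\}$ for input $\wspan{(k-2)}$ remains feasible for input $\wspan{(k-1)}$ because the only $W$-dependent constraint, $\muint[j]{} \leq \wspan[j]{} \cdot \Q^j$, only relaxes when $\wspan[j]{}$ grows; and at this feasible point the objective is non-decreasing by the concavity property. Hence the new optimum cannot decrease, and applying the non-decreasing map $S \mapsto \lceil(\emusum + S)/Q\rceil$ yields $\wspan{(k)} \geq \wspan{(k-1)}$. Positivity $\wspan{(k-1)} > 0$ is then immediate from $\wspan{(0)} > 0$ and monotonicity.

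The main obstacle I expect is being careful about intervals with $\wspan[j]{(k-1)} = 0$, where the ratio $\muint[j]{}/\wspan[j]{}$ is not defined; this is handled by noting that the constraint at Line~\ref{alg:WQ} of Algorithm~\ref{alg:opt} forces $\muint[j]{(k-1)} = 0$ in such intervals, so they contribute zero stall and can be ignored in the monotonicity argument, which then proceeds only over the non-empty intervals.
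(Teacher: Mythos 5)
Your proof is correct and follows essentially the same route as the paper's: induction on $k$, with the base case from non-negativity of the stall terms, and the inductive step combining (i) monotonicity of the per-interval spans in $W$ via Equation~\ref{eq:Wsplit}, (ii) the fact that $x \mapsto x \cdot \Iconc(c/x)^j$ is non-decreasing for the concave stall curves (Lemma~\ref{lm:monotonicI}), and (iii) feasibility of the previous optimal assignment under the relaxed constraints, so the maximized objective cannot decrease. Your explicit handling of intervals with $\wspan[j]{}=0$ is a small point of added care that the paper's proof leaves implicit.
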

\begin{IEEEproof}
  First note that functions $\Iconc(r)^j$ are concave and $\Iconc(0)^j
  = 0$. For any such function and positive constant $\muint{}$, one can
  prove that $\Iconc(\muint{}/x)^j \cdot x$ is monotonic non-decreasing in
  $x > 0$ (a formal proof is reported in Lemma~\ref{lm:monotonicI} in
  Appendix). The proof then proceeds by induction over the index $k$.

  \textbf{Base Case:} Since we assume $\beta > 0$, we have
  $\wspan{(0)}> 0$. Furthermore, since by definition all
  $\wspan[j]{(0)}$ terms are non-negative and functions $\Iconc(r)^j$
  have non-negative ranges, $\sum_{j=1}^N \Iconc(\muint[j]{(0)} /
  \wspan[j]{(0)})^j \cdot \wspan[j]{(0)}$ is non-negative. By
  definition of Equation~\ref{eq:iteration_dynamic}, this implies
  $\wspan{(1)} \geq \wspan{(0)}> 0$.

  \textbf{Inductive Step:} Consider $k \geq 2$. Note that in
  Equation~\ref{eq:iteration_dynamic}, $\wspan{(k)}$ is computed based
  on the value of $\wspan{(k-1)}$, from which we obtain the values of
  $\wspan[j]{(k-1)}$ in Equation~\ref{eq:Wsplit} and
  $\muint[j]{(k-1)}$ in Algorithm~\ref{alg:opt}; similarly,
  $\wspan{(k-1)}$ is computed based on $\wspan{(k-2)}$ and
  $\wspan[j]{(k-2)}, \muint[j]{(k-2)}$. By induction hypothesis, we
  have $\wspan{(k-1)} \geq \wspan{(k-2)} > 0$; based on
  Equation~\ref{eq:Wsplit}, this implies $\wspan[j]{(k-1)} \geq
  \wspan[j]{(k-2)}$ for all intervals $B^j$.

  Now consider Line~\ref{alg:opt_objective} of
  Algorithm~\ref{alg:opt}: since $\Iconc(\mu/x)^j \cdot x$ is
  monotonic for $x > 0$, it must hold:
  \begin{align}
    & \sum_{j=1}^N \Iconc(\muint[j]{(k-2)} / \wspan[j]{(k-1)})^j  \cdot \wspan[j]{(k-1)} \geq \nonumber \\
    & \sum_{j=1}^N \Iconc(\muint[j]{(k-2)} / \wspan[j]{(k-2)})^j  \cdot \wspan[j]{(k-2)}.
  \end{align}
  In other words, when running Algorithm~\ref{alg:opt} at iteration
  $k$ based on the values $\wspan[j]{(k-1)}$, there exists an
  assignment of variables ($\muint[j]{(k-1)} = \muint[j]{(k-2)}$,
  i.e., the same assignment as the previous iteration) that results in
  a value of the objective function that is greater than or equal to
  the one at iteration $k-1$. Furthermore, the assignment
  $\muint[j]{(k-1)} = \muint[j]{(k-2)}$ is feasible, in the sense that
  it satisfies the constraints at
  Lines~\ref{alg:natural}-\ref{alg:split_memory} of the algorithm:
  note that $\muint[j]{(k-2)} \leq \wspan[j]{(k-2)} \cdot \Q^j$
  implies $\muint[j]{(k-1)} \leq \wspan[j]{(k-1)} \cdot \Q^j$ since
  $\wspan[j]{(k-1)} \geq \wspan[j]{(k-2)}$.  Hence, given that the
  optimization problem is maximizing the objective function, it is
  guaranteed to find an assignment for variables $\muint[j]{(k-1)}$
  such that:
  \begin{align}
    & \sum_{j=1}^N \Iconc(\muint[j]{(k-1)} / \wspan[j]{(k-1)})^j  \cdot \wspan[j]{(k-1)} \geq \nonumber \\
    & \sum_{j=1}^N \Iconc(\muint[j]{(k-2)} / \wspan[j]{(k-2)})^j  \cdot \wspan[j]{(k-2)}.
  \end{align}
  In turn by definition of Equation~\ref{eq:iteration_dynamic} this
  implies $\wspan{(k)} \geq \wspan{(k-1)}$, concluding the induction
  step.
\end{IEEEproof}

\begin{lemma}\label{lm:equality}
  If the iteration in Equation~\ref{eq:iteration_dynamic} converges to
  a value $\wspan{(k)}$, then there exists a feasible assignment to
  variables $\muint[j]{(k)}$ that maximizes the objective function at
  Line~\ref{alg:opt_objective} of Algorithm~\ref{alg:opt} and for
  which $\sum_{j=1}^N \muint[j]{(k)} = \mu$.
\end{lemma}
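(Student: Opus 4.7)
The plan is to prove the lemma by a maximality/contradiction argument. I would first fix any optimizer of the problem in Algorithm~\ref{alg:opt}, and among all such optimizers pick one whose total memory assignment $\mu^\ast = \sum_{j=1}^N \muint[j]{(k)}$ is as large as possible (such a choice exists since the feasible set is finite—each $\muint[j]{(k)}$ is a natural number bounded by $\wspan[j]{(k)}\cdot \Q^j$). My goal is to show $\mu^\ast = \mu$; since $\mu^\ast \le \mu$ by constraint~\ref{alg:split_memory}, only $\mu^\ast < \mu$ needs to be ruled out.

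The first key step is to use monotonicity of the stall curves. Each $\Iconc(r)^j$ is non-decreasing on $[0,\Q^j]$ (the raw curve $I(r)^j$ in Equation~\ref{eq:stall_curve_j} is non-decreasing by construction, and its concave upper envelope preserves this). Hence if any interval were \emph{not} saturated, i.e.\ $\muint[j^\star]{(k)} < \wspan[j^\star]{(k)}\cdot \Q^{j^\star}$ for some $j^\star$, I could increment $\muint[j^\star]{(k)}$ by one: the result is still feasible (integer, within the per-interval and total caps, since $\mu^\ast < \mu$), and the objective at Line~\ref{alg:opt_objective} can only stay the same or increase, so it remains an optimizer. This contradicts the maximality of $\mu^\ast$. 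Therefore, assuming $\mu^\ast < \mu$, every interval must be saturated: $\muint[j]{(k)} = \wspan[j]{(k)}\cdot \Q^j$ for all $j$.

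The second key step uses the convergence condition. Under saturation, $\Iconc(\Q^j)^j = Q - \Q^j$ by the second branch of Equation~\ref{eq:stall_curve_j}, so the total stall is
\begin{equation*}
\mstall{(k)} = \sum_{j=1}^N (Q-\Q^j)\cdot \wspan[j]{(k)} = Q\cdot \wspan{(k)} - \mu^\ast,
\end{equation*}
recalling $\sum_{j=1}^N \wspan[j]{(k)} = \wspan{(k)}$ and $\mu^\ast = \sum_{j=1}^N \Q^j \wspan[j]{(k)}$ under saturation. Since the iteration has converged, substituting into Equation~\ref{eq:iteration_dynamic} yields
\begin{equation*}
\wspan{(k)} = \Big\lceil \big(E + \mu + Q\cdot \wspan{(k)} - \mu^\ast\big)/Q \Big\rceil = \wspan{(k)} + \lceil (E + \mu - \mu^\ast)/Q \rceil,
\end{equation*}
which forces $E + \mu - \mu^\ast \le 0$, i.e.\ $\mu^\ast \ge E + \mu$. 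Because the model requires $E > 0$, this gives $\mu^\ast > \mu$, contradicting $\mu^\ast < \mu$.

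The main obstacle is the saturation step: one must be careful to argue that monotonicity of the (concave envelope) stall curves is enough to justify unilaterally incrementing an unsaturated $\muint[j]{(k)}$ without violating any constraint. Once saturation is established, the algebra from the convergence fixed point is mechanical and cleanly yields the contradiction, concluding that the chosen maximizer must satisfy $\sum_{j=1}^N \muint[j]{(k)} = \mu$.
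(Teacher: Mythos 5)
Your proof is correct and follows essentially the same route as the paper's: both use monotonicity of the stall curves to argue that a maximizer with total assignment below $\mu$ forces every interval to be saturated ($\muint[j]{(k)} = \wspan[j]{(k)} \cdot \Q^j$), and then derive a contradiction from the convergence fixed point of Equation~\ref{eq:iteration_dynamic} together with $E > 0$. The only difference is cosmetic --- you extremize over the total assignment and increment one variable at a time, whereas the paper jumps directly to the fully saturated assignment --- so no further comparison is needed.
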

\begin{IEEEproof}
  Note that for a feasible assignment it cannot hold $\sum_{j=1}^N
  \muint[j]{(k)} > \mu$ due to the constraint at
  Line~\ref{alg:split_memory}. Hence by contradiction, assume that for
  all assignments that maximize the objective function it holds:
  $\sum_{j=1}^N \muint[j]{(k)} < \mu$.

  By definition, all $\wspan[j]{(k)}$ terms are non
  negative. Furthermore, functions $\Iconc(r)^j$ have non-negative
  ranges. Hence, increasing the value of a variable $\muint[j]{(k)}$
  cannot cause the objective function to decrease. Therefore,
  $\sum_{j=1}^N \muint[j]{(k)} < \mu$ must hold even when each
  variable $\muint[j]{(k)}$ is assigned its maximum value, which is
  $\muint[j]{(k)} = \wspan[j]{(k)} \cdot \Q^j$ based on the constraint
  at Line~\ref{alg:WQ}. We thus obtain: $\sum_{j=1}^N \wspan[j]{(k)}
  \cdot \Q^j < \mu$.  Furthermore, note that we have
  $\Iconc(\muint[j]{(k)} / \wspan[j]{(k)})^j = \Iconc(\Q^j)^j = Q -
  \Q^j$.

  Finally, given $E > 0$ and based on
  Equation~\ref{eq:iteration_dynamic} at convergence, we derive:
  \begin{align}
    \wspan{(k)} & = \Big\lceil \big( \emusum + \sum_{j=1}^N \Iconc(\muint[j]{(k)} / \wspan[j]{(k)})^j  \cdot \wspan[j]{(k)} \big) / Q \Big\rceil \nonumber \\
    & = \Big\lceil \big( E + \mu - \sum_{j=1}^N \Q^j \cdot \wspan[j]{(k)} + \sum_{j=1}^N Q \cdot \wspan[j]{(k)}  \big) / Q \Big\rceil \nonumber \\
    & > \Big\lceil \big( E +  \sum_{j=1}^N (Q \cdot \wspan[j]{(k)}) \big) / Q \Big\rceil \nonumber \\
    & = \lceil E / Q + \wspan{(k)} \rceil \nonumber \\
    & \geq 1+ \wspan{(k)},
  \end{align}
  which is a contradiction.
\end{IEEEproof}

\begin{theorem}
  \label{thm:length_dynamic}
  The iteration in Equation~\ref{eq:iteration_dynamic} terminates in a
  finite number of steps by either obtaining a value $\wspan{(k)}
  \cdot Q \cdot L_{max} > D$ or converging, in which case
  $\wspan{(k)}$ is an upper bound to worst-case span of the workload
  on the core under analysis.
\end{theorem}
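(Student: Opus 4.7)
The plan is to split the proof into two parts following the theorem's structure: termination and upper-bound correctness. For termination, I will invoke Lemma~\ref{lm:dynamic_monotonic} directly: the sequence $\wspan{(k)}$ is a monotonically non-decreasing sequence of natural numbers (because of the ceiling operator in Equation~\ref{eq:iteration_dynamic}). Such a sequence must either become constant (convergence) or eventually exceed any fixed threshold after finitely many steps, so in particular it will either converge or cross the bound $D/(Q\cdot L_{max})$ within finitely many iterations.

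For the correctness part, I will assume that the iteration converges to some value $\wspan{(k)}$ and argue that it bounds the true worst-case span. By Lemma~\ref{lm:equality}, there is a feasible assignment $\muint[j]{(k)}$ with $\sum_j \muint[j]{(k)} = \mu$ that maximizes the stall, and by the convergence condition,
\[
\wspan{(k)} \cdot Q \;\geq\; \beta + \sum_{j=1}^N \Iconc(\muint[j]{(k)}/\wspan[j]{(k)})^j \cdot \wspan[j]{(k)} \;=\; \beta + \mstall{(k)}.
\]

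Then I will compare with an arbitrary actual execution. Let that execution have true span $W^*$; we must show $W^* \leq \wspan{(k)}$. Consider the \emph{first} $\wspan{(k)}$ regulation periods of the actual execution. By Equation~\ref{eq:Wsplit}, the portion of those periods falling inside interval $B^j$ is exactly $\wspan[j]{(k)}$. Let $\mu^{*,j}$ denote the actual number of memory requests issued in interval $B^j$ during this prefix. The triple $(\mu^{*,j})_j$ is a feasible point for Algorithm~\ref{alg:opt} at spans $\wspan[j]{(k)}$: each $\mu^{*,j}$ is a natural number bounded by $\wspan[j]{(k)} \cdot \Q^j$ (regulation constraint), and $\sum_j \mu^{*,j} \le \mu$. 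Applying Lemma~\ref{lem:stall_ub} inside each interval and then using the maximality of the algorithm's objective, the total stall suffered in this prefix is at most $\mstall{(k)}$. Hence the useful work completed in $\wspan{(k)}$ periods is at least $\wspan{(k)}\cdot Q - \mstall{(k)} \ge \beta$, so the workload finishes within $\wspan{(k)}$ periods, giving $W^* \leq \wspan{(k)}$.

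The main obstacle I anticipate is carefully reconciling the actual execution, which may have its own per-interval spans $W^{*,j}$ (possibly smaller than $\wspan[j]{(k)}$ if finished earlier), with the algorithm's spans $\wspan[j]{(k)}$ used inside Algorithm~\ref{alg:opt}. The trick of evaluating the actual execution at the prefix of length $\wspan{(k)}$ avoids this mismatch because the per-interval prefix length matches $\wspan[j]{(k)}$ exactly, making the actual per-interval memory profile directly feasible for the optimization problem. The remaining care needed is in verifying the regulation constraint $\mu^{*,j} \le \wspan[j]{(k)}\cdot \Q^j$ and that Lemma~\ref{lem:stall_ub}'s hypothesis $\muint{} \le \wspan{}\cdot \Q$ holds for each interval; both are immediate consequences of the memory budget being enforced each period.
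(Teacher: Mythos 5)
Your proposal is correct and follows essentially the same route as the paper's proof: Lemma~\ref{lm:dynamic_monotonic} gives termination via monotonicity over the naturals, and Lemma~\ref{lm:equality} combined with Lemma~\ref{lem:stall_ub} and the maximality of Algorithm~\ref{alg:opt} gives the stall bound at convergence. Your explicit comparison against an arbitrary actual execution restricted to the first $\wspan{(k)}$ regulation periods merely spells out a step the paper leaves implicit, and it is a valid (indeed slightly more careful) way to close the argument.
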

\begin{IEEEproof}
   We first show that the algorithm terminates. By
   Lemma~\ref{lm:dynamic_monotonic}, $\wspan{(k)} \geq
   \wspan{(k-1)}$. Since $\wspan{(k)}$ is a natural number, it follows
   that the algorithm must either converge or terminate with a value
   of $\wspan{(k)}$ greater than the deadline in a finite number of
   steps.
    
   Hence, assume that the algorithm converges to $\wspan{(k)}$. Based
   on Lemma~\ref{lm:equality}, we can find an assignment to variables
   $\muint[j]{(k)}$ that maximizes the stall in the objective function
   of Algorithm~\ref{alg:opt} and such that $\sum_{j=1}^N
   \muint[j]{(k)} = \mu$. Hence, the assignment is valid, in the sense
   that the workload is able to perform its worst-case number of
   memory transactions. Furthermore, due to Line~\ref{alg:WQ}, it
   holds $\muint[j]{(k)} \leq \wspan[j]{(k)} \cdot \Q^j$ for all
   intervals; hence, by Lemma~\ref{lem:stall_ub} and for each $B^j$,
   $\mstall[j]{(k)} = \Iconc(\muint[j]{(k)} / \wspan[j]{(k)})^j \cdot
   \wspan[j]{(k)}$ is an upper bound to the stall when performing
   $\muint[j]{(k)}$ memory accesses in $\wspan[j]{(k)}$ regulation
   periods. Now given that Algorithm~\ref{alg:opt} maximizes the
   objective function at Line~\ref{alg:opt_objective} over all
   possible assignments to variables $\muint[j]{(k)}$, it follows that
   $\sum_{j=1}^N \mstall[j]{(k)} = \mstall{(k)}$ is an upper bound to
   the cumulative stall when performing $\muint{}$ memory accesses
   over $\sum_{j=1}^N \wspan[j]{(k)} = \wspan{(k)}$ regulation
   periods.
   
   Finally, by definition, the worst-case length of the workload can be
   obtained (in number of slots) as the sum of $\beta$ and the stall
   suffered by the workload. By convergence to $\wspan{(k)}$, we have:
   \begin{equation}
     \wspan{(k)} \cdot Q = \ceil[\Big]{\frac{\wspan{(k)}}{Q}} \cdot Q
     \geq \emusum + \sum_{j=1}^N \mstall[j]{(k)} = \emusum + \mstall{(k)},
   \end{equation}
   and since $\mstall{(k)}$ is an upper bound to the stall suffered in
   $\wspan{(k)}$ regulation periods, this implies that $\wspan{(k)}$
   is indeed an upper bound to the total span of the workload.
\end{IEEEproof}

\subsection{Implementing the Stall Algorithm} \label{sec:algorithm}

In this section, we show how to efficiently implement
Algorithm~\ref{alg:opt}. The algorithm is similar to a concave
optimization problem, except that variables are integer rather than
real.

\newcommand{\epts}[1]{{\mathcal{E}^{#1}}}
\newcommand{\nextfn}[1]{{\textrm{next}^{#1}}}
\newcommand{\slopefn}[1]{{\textrm{slope}^{#1}}}

By construction, each $\Iconc(r)^j$ function is a concave piece-wise
linear, and can be thought as a sequence of \emph{segments} with
decreasing slope. For each $\Iconc(r)^j$ curve, consider the set of
integer values of $r$ corresponding to the beginning of a
segment. Call each of this values a \emph{start point}, and $\epts{j}$
the set of all the start points in $\Iconc(r)^j$. Start points are
highlighted with a solid dot ($\bullet$) in
Figure~\ref{fig:mult_budget}. Considering Core 3, for the example in
the figure we have: $\epts{1} = \{0, 2\}, \epts{2} = \{0, 2, 3, 4\}$,
and $\epts{3} = \{0\}$.

Using this formulation, we introduce two helper functions defined on
$\epts{j}$ for $\Iconc(r)^j$. First, the $\nextfn{j}(r)$ function
returns the next start point strictly greater $r$:
\begin{equation}
 \nextfn{j}(r) := \min_p \{p \in \epts{j} ~|~ p > r\}.
\end{equation}
Second, the function $\slopefn{j}(r)$ simply returns the slope of the
segment at $r$. If $r$ is a start point, the function returns the
slope of the starting segment. Formally:
\begin{equation}
 \slopefn{j}(r) := \Big( \Iconc\big( \textrm{next}^j(r) \big)^j
- \Iconc(r)^j \Big) / \big( \textrm{next}^j(r) - r \big)
\end{equation}
All the slopes are annotated in Figure~\ref{fig:mult_budget} right
above the corresponding segment.

Algorithm~\ref{alg:concave} first initializes all variables
$\muint[j]{(k)}$ to zero. Then, the algorithm iterates over
Lines~\ref{alg:do}-\ref{alg:until} until either (1)
$\sum_{j=1}^N \muint[j]{(k)} = \mu$ holds, meaning that all $\mu$
memory transactions have been distributed among the $N$ memory
scheduling interval; or (2) $\muint[j]{(k)}
= \wspan[j]{(k)} \cdot \Q^j$ for all intervals, meaning that we cannot
assign any more memory transactions due to the regulation
constraints. When the condition $\muint[j]{(k)}
= \wspan[j]{(k)} \cdot \Q^j$ holds for some interval $B^j$, we say
that $B^j$ is \emph{saturated}. The set of all the unsaturated
intervals is computed at Line~\ref{alg:unsat}, and their respective
memory rates $r^j$ given the current assignment $\muint[j]{(k)}$ is
computed at Line~\ref{alg:rates}.

For each iteration, at Line~\ref{alg:argmax} the algorithm selects the
interval $B^j$ with the highest slope for the currently assigned
memory rate $r^j$ among all the unsaturated intervals -- in case two
intervals have the same slope, the tie can be broken
arbitrarily. Finally, at Line~\ref{alg:assignment} the value of memory
transactions $\muint[p]{(k)}$ assigned to the selected interval $B^p$
is modified to the minimum of two expressions: (1) $\mu - \sum_{j \ne
p} \muint[j]{(k)}$, that is, all remaining transactions. In this case,
after the assignment, it holds that $\sum_{j=1}^N \muint[j]{(k)}
= \mu$ and the algorithm terminates immediately. (2)
$\nextfn{p}(r^p) \cdot \wspan[p]{(k)}$, that is,
$\muint[p]{(k)}$ is incremented so that $r^p = \muint[p]{(k)} /
\wspan[p]{(k)}$ becomes equal to the next segment start point.

\begin{framed}
\begin{lstlisting}[caption={Stall maximization over multiple intervals}, 
label=alg:concave, abovecaptionskip=-\medskipamount, belowcaptionskip=0.2cm, 
mathescape=true, numbers=left, 
numberstyle=\scriptsize,basicstyle=\footnotesize,escapeinside={(*@}{@*)}]
Input: $B^1, \ldots, B^N$ (*@\hfill@*)/* sequence of intervals */
Input: $\wspan[1]{(k)}, \ldots, \wspan[N]{(k)}$ (*@\hfill@*)/* span in each interval */ 
Input: $\mu$ (*@\hfill@*)/* total memory request */

Output: $\muint[1]{(k)}, \ldots, \muint[N]{(k)}$ (*@\hfill@*)/* memory requests in each interval */

$\forall j: \muint[j]{(k)}$ $\gets$ $0$ (*@\hfill@*)

do: (*@\label{alg:do}@*) 
  (*@\hfill@*) /* consider only unsaturated intervals */
  $\mathcal{B}$ $\gets$ $\{j ~|~ \muint{j}(k) < \wspan[j]{(k)} \cdot \Q^j\}$ (*@\label{alg:unsat}@*) 
  (*@\hfill@*) /* compute current memory rate $r$ on each unsaturated interval */
  $\forall j \in \mathcal{B}: r^j$ $\gets$ $\muint[j]{(k)} / \wspan[j]{(k)}$ (*@\label{alg:rates}@*) 
  (*@\hfill@*) /* find curve $p$ where $r$ yields maximum stall slope */
  $p$ $\gets$ $\textrm{argmax}_{j \in \mathcal{B}} \big\{\slopefn{j}(r^j)\big\}$  (*@\label{alg:argmax}@*) 
  (*@\hfill@*) /* assign as many as possible transactions to this interval */
  $\muint[p]{(k)}$ $\gets$ $\min\big( \mu - \sum_{j \ne p} \muint[j]{(k)}, \textrm{next}^p(r^p) \cdot \wspan[p]{(k)} \big)$ (*@\label{alg:assignment}@*) 
  (*@\hfill@*) /* stop if all $\muint{}$ assigned, or all intervals are saturated */
until ($\sum_{j=1}^N \muint[j]{(k)} = \mu$ or $\forall j: \muint[j]{(k)} = \wspan[j]{(k)} \cdot \Q^j$) (*@\label{alg:until}@*) (*@\hfill@*)
\end{lstlisting}
\end{framed}

Note that the segment start points, $\mu$ and $\wspan[p]{(k)}$ are all
natural numbers; hence, assuming that the values of variables
$\muint[j]{(k)}$ were integer before the assignment at
Line~\ref{alg:argmax}, the new value assigned to $\muint[p]{(k)}$ is
also integer. Furthermore, the new assignment cannot violate the
constraints $\sum_{j =1}^N \muint[j]{(k)} \leq \mu$ or
$\muint[j]{(k)} \leq \wspan[j]{(k)} \cdot \Q^j$, since we use the
minimum of the two expressions. Hence, this shows that the assignment
to variables $\muint[j]{(k)}$ operated by Algorithm~\ref{alg:concave}
is feasible according to the constraints at
Lines~\ref{alg:natural}-\ref{alg:split_memory} of
Algorithm~\ref{alg:opt}. Furthermore, note that
Algorithm~\ref{alg:concave} is guaranteed to terminate after the
assignment at Line~\ref{alg:assignment} selects the first expression,
or after all intervals have been saturated. The number of segment
start points for each function $\Iconc(r)^j$ is $\mathcal{O}(m)$;
hence, the number of iteration of the algorithm is
$\mathcal{O}(N \cdot m)$.

Finally, we show that once the algorithm terminates, the assignment to
variables $\muint[j]{(k)}$ maximizes the cumulative stall
$\mstall{(k)}
= \sum_{j=1}^N \mstall[j]{(k)} = \sum_{j=1}^N \Iconc(\muint[j]{(k)}
/ \wspan[j]{(k)})^j \cdot \wspan[j]{(k)}$, that is, the objective
function in Algorithm~\ref{alg:opt}. This follows from the way
intervals are selected at Line~\ref{alg:argmax}. By contradiction,
assume that there exists a different feasible assignment, call it
$\{\bar{\mu}^1_{(k)}, \ldots, \bar{\mu}^N_{(k)}\}$, such that
$\sum_{j=1}^N \Iconc(\bar{\mu}^j_{(k)}
/ \wspan[j]{(k)})^j \cdot \wspan[j]{(k)}
> \sum_{j=1}^N \Iconc(\muint[j]{(k)}
/ \wspan[j]{(k)})^j \cdot \wspan[j]{(k)}$; then can obtain
$\{\bar{\mu}^1_{(k)}, \ldots, \bar{\mu}^N_{(k)}\}$ by iteratively
modifying $\{\muint[1]{(k)}, \ldots, \muint[N]{(k)}\}$, subtracting
some number of memory transactions, say $\Delta$, from one variable
$\muint[j]{(k)}$ and adding them to another variable
$\muint[p]{(k)}$. Now define:
\begin{align}
\textrm{slope}^p & = \frac{ \Iconc\big( (  \muint[p]{(k)}+\Delta)/\wspan[p]{(k)}  \big)^p  - \Iconc(   \muint[p]{(k)} / \wspan[p]{(k)}  )^p}{\Delta / \wspan[p]{(k)}}, \nonumber \\
\textrm{slope}^j & = \frac{  \Iconc(   \muint[j]{(k)} / \wspan[j]{(k)}  )^j - \Iconc\big( (  \muint[j]{(k)}-\Delta)/\wspan[j]{(k)}  \big)^j}{\Delta / \wspan[j]{(k)}}, 
\end{align}
as the resulting slopes for functions $\Iconc(r)^p_i$ and
$\Iconc(r)^j_i$. Note that the modification to the variables will
increase the cumulative stall by $\textrm{slope}^p \cdot \Delta$ and
reduce it by $\textrm{slope}^j \cdot \Delta$. But because
Line~\ref{alg:argmax} always selects the function with the highest
slope, it must be $\textrm{slope}^p \leq \textrm{slope}^j$; hence, the
cumulative stall cannot increase, a contradiction. In summary, we have
shown the following lemma:
\begin{lemma}
Algorithm~\ref{alg:concave} terminates in a finite number of
steps. Furthermore, the resulting assignment to variables
$\{\muint[1]{(k)}, \ldots, \muint[N]{(k)}\}$ determines a value of
$\sum_{j=1}^N \Iconc(\muint[j]{(k)}
/ \wspan[j]{(k)})^j \cdot \wspan[j]{(k)}$ equal to the value of the
objective function computed by Algorithm~\ref{alg:opt}.
\end{lemma}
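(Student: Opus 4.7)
My plan is to decompose the proof into three pieces: termination, feasibility of the final assignment, and optimality of the cumulative stall it attains. Termination is the easiest piece. At each pass through the do-until loop, the update at Line~\ref{alg:assignment} either sets $\muint[p]{(k)}$ to $\mu - \sum_{j \ne p}\muint[j]{(k)}$, in which case the condition at Line~\ref{alg:until} holds and the loop exits on the next check, or it raises the current rate $r^p = \muint[p]{(k)}/\wspan[p]{(k)}$ to the next start point $\nextfn{p}(r^p) \in \epts{p}$. The latter can happen at most $|\epts{j}|$ times per interval before $B^j$ saturates, and since each $\Iconc(r)^j$ has $\mathcal{O}(m)$ start points, the loop terminates after $\mathcal{O}(Nm)$ iterations.

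Feasibility is immediate from the shape of the update: the new $\muint[p]{(k)}$ is the minimum of two natural numbers (this preserves the integrality at Line~\ref{alg:natural} of Algorithm~\ref{alg:opt}), and by construction neither expression can violate the total-requests bound $\sum_j \muint[j]{(k)} \le \mu$ at Line~\ref{alg:split_memory} nor the per-interval cap $\muint[j]{(k)} \le \wspan[j]{(k)}\cdot\Q^j$ at Line~\ref{alg:WQ}; the other variables are not touched in that step. For optimality I would use a greedy exchange argument. Suppose, for contradiction, that some feasible assignment $\{\bar{\mu}^j_{(k)}\}$ strictly beats the greedy output $\{\muint[j]{(k)}\}$ on the objective of Line~\ref{alg:opt_objective}. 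I would write the transition from the greedy output to $\{\bar{\mu}^j_{(k)}\}$ as a sequence of elementary transfers---remove $\Delta$ transactions from some index $j$ and add them to some index $p$, while staying within single segments of the piecewise-linear curves $\Iconc(\cdot)^j$ and $\Iconc(\cdot)^p$. For each such transfer, concavity of the $\Iconc(r)^j$ curves together with the selection rule at Line~\ref{alg:argmax} forces $\textrm{slope}^p \le \textrm{slope}^j$, so the net change is non-positive, contradicting a strict improvement.

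The main obstacle I anticipate is making the exchange step genuinely rigorous. Line~\ref{alg:argmax} greedily picks the \emph{currently} maximum slope, not the global maximum over the whole curve, so I must exploit the fact that by concavity the slopes of $\Iconc(r)^j$ are non-increasing in $r$: any transfer that could have been chosen earlier but wasn't has a slope no larger than the slope actually chosen at that point, and any extra mass pushed beyond the greedy rate on curve $p$ earns a smaller slope than the one that diverted mass away from curve $j$. Formalising this likely requires decomposing $\Delta$ across segment boundaries of both $\Iconc(r)^p$ and $\Iconc(r)^j$ so that each sub-transfer occurs at well-defined segment slopes, and then applying the greedy inequality sub-transfer by sub-transfer. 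Once that monotonicity-along-the-curve fact is pinned down, the slope comparison closes the contradiction and yields the claimed equality with the optimum of Algorithm~\ref{alg:opt}.
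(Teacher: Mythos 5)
Your proposal is correct and follows essentially the same route as the paper: termination via the $\mathcal{O}(Nm)$ bound on segment start points, feasibility from the minimum of two integer expressions, and optimality by a contradiction/exchange argument comparing the slopes $\textrm{slope}^p \leq \textrm{slope}^j$ earned and lost by an elementary transfer of $\Delta$ transactions. Your anticipated refinement---decomposing each transfer across segment boundaries so that every sub-transfer occurs at a well-defined slope, then invoking the non-increasing slopes of the concave piecewise-linear curves---is in fact slightly more careful than the paper's own treatment of the exchange step.
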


\textbf{Computational Complexity}: 
note that each iteration of the algorithm can be easily optimized to
execute in $\mathcal{O}(1)$. The sum of variables $\muint[j]{(k)}$ at
Lines~\ref{alg:assignment},~\ref{alg:until} can be executed in
constant time by keeping the sum in a variable and updating it each
time $\muint[p]{(k)}$ is modified at Line~\ref{alg:assignment}. The
selection at Line~\ref{alg:argmax} can be performed in constant time
by creating a table of segments ordered by slope. Since ordering the
segments then dominates the complexity of the algorithm, this results
in a $\mathcal{O}\big(N \cdot m \cdot \log(N \cdot m) \big)$ time for
Algorithm~\ref{alg:concave}. Note that, the analysis assumes a 
know memory schedule, but, is generic with respect to CPU scheduling. Thus, it 
is applicable to event-triggered CPU schedulers.

\section{Scenario: Budget Assignment and Schedulability Ratios for IMA Partitions}

Integrated Modular Avionics (IMA) systems use time-triggered scheduling of partitions, also known as ARINC 653 scheduling, where each partition is assigned, at compile time, a fixed start time and span in a major cycle i.e., a hyperperiod (H). These partition-level scheduling decisions are stored at compile time resulting in a static CPU schedule, which is repeated every major cycle. 

Our analysis (Section \ref{sec:wcet_n_bud}) works with known memory assignment across cores and known workload parameters. IMA systems are a natural fit, representing a real-world scenario.
We consider a set of IMA partitions with a fixed major cycle and assignment of partitions to cores.
For simplicity, we assume the order of execution of the partitions is known, and we assume that each partition executes once in the major cycle, and that the major cycle is synchronized among cores.  
Our goal is to use our analysis from Section \ref{sec:wcet_n_bud} and perform an empirical evaluation comparing the ratio of schedulable tasksets to generated tasksets, under dynamic memory budget assignment policy against the static budget assignment policies, under a fixed partition execution order on each core. 

In the next Subsections, we describe the setup used to compare the budget assignment policies and the two sets of experiments, one, that varies the number of cores and two, that varies the number of memory intensive partitions in a system.

\subsection{Setup}

\par{\bfseries{IMA Partition Set Generation:}}
For each experiment run, we consider $m$ cores and a set of $4 \times m$ IMA 
partitions, with a fixed major cycle, i.e., hyperperiod (H) of $128 ms$. The 
earliest start time of each partition is set to $t=0$ and the deadline to the 
hyperperiod i.e. $128 ms$. From the perspective of the analysis (Section 
\ref{sec:wcet_n_bud}), each partition is a workload.

We characterize the varying memory demand between partitions as exhibited by avionic applications \cite{Agrawal:ECRTS2017}, using a parameter --- \textit{memory intensity (MI)} --- , that represents the ratio of pure memory demand to the sum of pure processing demand and pure memory demand of a partition under single-core case i.e., no contentions. We then use a bi-modal distribution for $MI$, where each partition either has a HIGH MI mode or a LOW MI mode. The use of two modes is first, consistent with the memory intensity behavior exhibited by partitions in a real avionic application \cite{Agrawal:ECRTS2017}, and second, some partitions perform I/O activity that is memory- intensive.  All HIGH MI mode partitions are randomly assigned an $MI$ value  in the range of $ [0.5, 0.99]$, whereas for LOW MI mode partitions, the $MI$ value range is $[0.001, 0.1]$. We use a parameter \textit{memory intensity ratio} $MIr$ to vary the number of partitions in the HIGH MI mode to that in the LOW MI mode in the system.

Each partition is then randomly assigned a core, such that each core ends up with $4$ partitions. The setup then generates per partition single-core utilization using UUniFast algorithm~\cite{Bini:Springer2005} such that $U$ is the cumulative single-core utilization of each core. 
The parameter $U$ allows varying the cumulative single-core utilization of partitions assigned to a core. Next, the setup generates $E$ and $\mu$ values for each partition based on its single-core utilization and memory intensity (MI) value, assuming no stall. The $E$ and $\mu$ values of each partition respectively represent an aggregated $E$ demand and an aggregated $\mu$ demand of all tasks assigned to it, in line with existing works like~\cite{Nowotsch:ECRTS2014} and~\cite{Agrawal:ECRTS2017}.

\par{\bfseries{System-wide Parameters:}} We use realistic system-wide 
parameters: $L_{max} =  2.4 \times 10^{-6} s $, $P = 1 ms$, resulting in $Q = 
41666$ as described in~\cite{SCE_ecrts15}.

\par{\bfseries{Budget Assignment Policies:}} 
We consider two static budget assignment policies: \textit{Static and even 
(SE)} that assigns to each core a constant and identical budget of $1 / m $ 
times the total budget, e.g., for a 4-core system $\mathcal{Q} = $\{10416, 
10416, 10416, 10416\}, and  \textit{static and uneven (SU)} that assigns to 
each core a constant budget based on the weight of each core, e.g., 
$\mathcal{Q} = \{416, 20416, 5416, 15416\}$. For the SU policy, we use a 
heuristic to generate the weight of each core and thereby, a budget assignment, 
based on the input partition set.

The key idea behind the heuristic is to assign cores with higher memory demand 
a higher memory budget. The heuristic computes weight of each core based 
on the ratio of the remaining 
cumulative $\mu$ to that of the sum of remaining cumulative $\mu$ and 
cumulative $E$ on a core. Then, the memory bandwidth 
$Q$ is partitioned among cores based on the computed weights resulting in a 
budget assignment. This is 
similar to the term memory intensity, albeit on a core-level.

\begin{figure}[!htb]
		\centering
		\includegraphics[width=\linewidth]{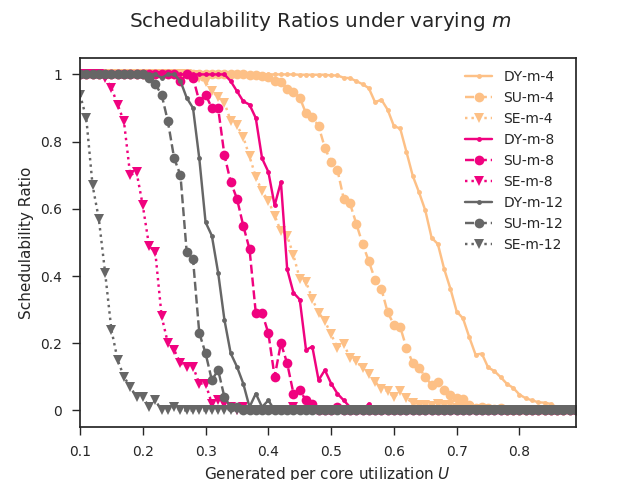}
		\caption{Schedulability ratios under varying $m$ from $4$ to $12$ cores 
		in steps of $4$, for memory budget assignment policies: Dynamic (DY), 
		Static Uneven (SU), and Static Even (SE). Memory intensity ratio $MIr$ 
		of $0.25$} 								\label{fig:exp-vary-m}
\end{figure}

We compare the static policies (SE and SU) against a \textit{dynamic policy 
(DY)}, which 
assigns dynamic memory budgets to cores, using the heuristic. As compared to 
the static SU policy that uses the heuristic at time $t=0$ only, DY policy 
recomputes the weight of every core each time a partition finishes execution, 
resulting in a dynamic budget assignment.

\subsection{Varying Number of Cores $m$}
Figure \ref{fig:exp-vary-m} compares the schedulability ratios for each of the 
three budget assignment policies --- DY, SU and SE ---- under varying the 
number of cores $m$ from $4$ to $12$ in steps of $4$, for a fixed memory 
intensity ratio $MIr$ of $0.25$. In Figure \ref{fig:exp-vary-m}, for each value 
of $U$, we generated $1000$ partition sets for $m=4$ case, and $100$ partition 
sets for each of $m = 8$ and $m=12$ cases. On 
the x-axis, we vary the cumulative per core utilization $U$ from $0.1$ to $0.9$ 
in steps of $0.01$.
 
First, we observe that as the number of cores $m$ increases, the schedulability ratio decreases for the plots shift towards the left, for each of the three budget assignment policies. This is because, with increasing the number of cores, the total memory supply remains constant, albeit the total memory demand increases as the number of HIGH MI mode partitions increase in the system.
Second, for each value of $m$, the dynamic policy DY dominates the static policies SU and SE.

\subsection{Varying Memory Intensity ratio $MIr$}
Now, we vary the memory intensity ratio $MIr$ from $0.15$ to $0.50$ that impacts the number of HIGH MI partitions in the system, and consequently, the number of LOW MI partitions in the system. We set the number of cores $m$ to $8$.

Figure \ref{fig:exp-varying-MIR} shows the schedulability ratios for each of the three budget assignment policies --- DY, SU and SE ---- under varying $MIr$.
On the x-axis, we vary the cumulative per core utilization $U$ from $0.1$ to $0.9$ in steps of $0.01$.
In Figure \ref{fig:exp-varying-MIR}, we generated $100$ partition sets for every combination of $U$ and $MIr$.

As the $MIr$ ratio increases, the cumulative memory load from all cores on the memory increases, in general. Consequently, we observe that schedulability ratio plots shift towards the left on increasing the $MIr$ ratios. 
Further, for each memory intensity ratio $MIr$, the dynamic budget assignment policy DY dominates static policies SU and SE.

\begin{figure}[!htb]
	\centering
	\includegraphics[width=\linewidth ]{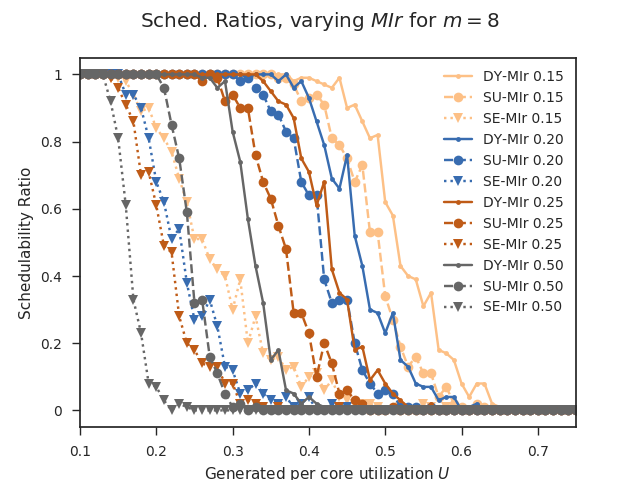}
	\caption{Schedulability ratios under varying memory intensity ratio $MIr$ from $0.15$ to $0.50$, for memory budget assignment policies: Dynamic (DY), Static Uneven (SU), and Static Even (SE). The number of cores $m$ is set to $8$.}  			\label{fig:exp-varying-MIR}
\end{figure}

\section{Related Work}
\label{sec:rel_wrk}

Recent literature on the design of real-time systems on multi-core
platforms considers main memory as a significant source of unpredictability,
and an important interfering channel to mitigate. Predictable memory
controllers have been proposed in~\cite{pret_mc, predator,
  medusa}. OS-level techniques implementable on COTS hardware to
regulate access of cores to main memory have been proposed and
evaluated in~\cite{memguard, memguard_tc, mc_2, mc2_1-out-of-m}. Yet
another body of work has investigated the idea of strictly serializing
access of cores to main memory. For instance, the work
in~\cite{mem_centric} clusters memory operations in tasks via cache
pre-fetching using compiler-level transformations, defining memory-
and execution- phases. Then, a central scheduler only allows at most
one memory-phase to be active at any point in time. A similar scheme
was adopted in~\cite{spm_os, pellizz_dyn_spm:ecrts13,
  dma_spm_pellizz:ecrts17, aer_model} using DMAs instead of
CPU-initiated pre-fetches and scratchpad memories. A recent
work~\cite{Andersson:2018:SAT} proposes an analysis for co-running
tasks contending for memory resources, i.e. with no explicit bandwidth
partitioning.

By clustering and serializing access to shared memory, interference is
avoided by design. Compared to this approach, regulation has the
advantage of being entirely implementable at OS-level. For memory
regulation techniques, analytic bounds for the temporal behavior of
tasks was also derived~\cite{gang_mg_analysis, SCE_ecrts15,
  SCE_ecrts17, memserv_pellizz}. Similarly, the work
in~\cite{behnam:2013} derives runtime guarantees when both a CPU
server and memory regulation are used. These works focus on static
memory bandwidth partitioning.

With respect to static and even budget assignment, a first analysis
was derived in~\cite{SCE_ecrts15}. In~\cite{gang_mg_analysis}, an
analysis for static and uneven bandwidth partitioning was performed
assuming only knowledge of the memory budget $q_i$ for the core under
analysis, and assuming arbitrary assignment to the other $m-1$ cores.
More recently, the work in~\cite{SCE_ecrts17} demonstrated that by
leveraging \emph{exact} knowledge of each core's budget $q_i$ it is
possible to drastically reduce the pessimism of the analysis.

A few works~\cite{memguard, memguard_tc} proposed unused budget
reclamation. However, no offline guarantees can be provided on the
dynamic portion of the assigned budget. The work in~\cite{wang:SIES14}
considers budget reclamation and derives WCET guarantees assuming full
knowledge of the workload on all cores. In a more recent work,
Nowotsch et al.~\cite{Nowotsch:RTNS2013, Nowotsch:ECRTS2014} consider
avionics temporal partitions with pre-defined budget assignment. In
this way, they are able to compute offline the WCET of application
inside a partition, albeit the budget may vary at the boundaries of
partitions. Finally, the work in~\cite{Agrawal:ECRTS2017} relaxes the
strict single budget-to-partition assignment in~\cite{Nowotsch:ECRTS2014} and
allows different budgets being assigned to a partition offline, enabling 
dynamic budget assignment, from a set of design-time fixed budgets. By assuming 
that memory stall is pre-computed in each budget, the WCET  computation problem 
is then be decomposed in (1) assigning ``compatible''  budgets across cores; 
and (2) minimizing the use of high-budget slots by the task under analysis.

{\bf What sets this work apart is the generality of the provided
  results}. Unlike the aforementioned literature, we do not assume any
specific budget re-assignment scheme. In fact, we provide a
methodology that can be used to compute the worst-case runtime of a
task given any dynamic budget-to-core assignment. To use our results,
either exact knowledge of budget assignment over time is known; or a
critical instant for memory budget re-assignments should be
identified.
 
\section{Conclusion}
\label{sec:conc}

In this paper, we presented a methodology to analyze the worst-case
execution time and schedulability of real-time workload under dynamic
memory scheduling. We first introduced a simple iterative algorithm to
compute the span of workload under static and uneven budget-to-core
assignment. We then generalized the problem to consider a generic
memory schedule and formulated the worst-case span analysis as a
stall-maximization problem. Next, we demonstrated that the problem has
strong similarities with concave optimization and proposed a
low-complexity solution to determine the access pattern that maximizes
the overall memory stall. As a use case, we considered an IMA setting
where a subset of partitions run memory-intensive workload. In this
scenario, dynamic memory scheduling outperformed traditional static
bandwidth partitioning. The analysis assumes a known memory schedule. It 
is, however, generic with respect to CPU scheduling. Thus, it is applicable for 
event-triggered CPU schedulers. As a future work, we intend to study online
bandwidth scheduling strategies for which a critical instant on
decisions taken of both processor and memory can be identified.

\section*{Acknowledgment}
The authors would like to thank the anonymous reviewers for their helpful suggestions.

\bibliographystyle{IEEEtran}
\bibliography{ms.bbl}
\appendix

\section{Appendix}

\begin{lemma} \label{lm:curve_above_zero}
Consider a function $f: \mathbb{R}_{\geq 0} \rightarrow \mathbb{R}$ such that $f$ is concave and $f(0) = 0$. Then $\forall a, b \in \mathbb{R}_{\geq 0}$ with $b > a$:
\begin{equation}
f(a) - a \cdot \frac{f(b) - f(a)}{b - a} \geq 0.
\end{equation}
\end{lemma}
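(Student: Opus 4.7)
The quantity on the left-hand side has a clean geometric interpretation: it is the $y$-intercept of the secant line through the points $(a, f(a))$ and $(b, f(b))$. Since $f(0) = 0$, the inequality is therefore equivalent to the statement that this secant line, extended to $x = 0$, lies on or above the graph of $f$ at the origin. This is the standard ``decreasing slope'' characterization of concavity applied to the three points $0 < a < b$, so the proof should be a short algebraic manipulation once concavity is applied with the right convex combination.

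The plan is to dispatch the degenerate case $a = 0$ first, where both $f(a)$ and $a \cdot \frac{f(b)-f(a)}{b-a}$ are zero, so the inequality holds trivially. For the main case $a > 0$, I would write $a$ as a convex combination of $0$ and $b$, namely $a = \tfrac{a}{b}\cdot b + \bigl(1 - \tfrac{a}{b}\bigr)\cdot 0$, with $\tfrac{a}{b} \in (0,1]$. Applying the definition of concavity and using $f(0)=0$ yields $f(a) \geq \tfrac{a}{b}\,f(b)$, equivalently $b\, f(a) \geq a\, f(b)$.

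From $b\, f(a) \geq a\, f(b)$, subtracting $a\, f(a)$ from both sides gives $(b-a)\, f(a) \geq a\,\bigl(f(b) - f(a)\bigr)$. Because $b > a \geq 0$ implies $b - a > 0$, I can divide through by $b-a$ to obtain
\[
f(a) \;\geq\; a \cdot \frac{f(b) - f(a)}{b - a},
\]
which rearranges to the claimed inequality.

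The only real choice in the proof is picking the convex combination; writing $a$ as $\tfrac{a}{b}\, b + (1-\tfrac{a}{b})\cdot 0$ is the natural one because $f(0)=0$ then eliminates one of the two terms on the right of the concavity inequality. I do not anticipate a genuine obstacle; the whole argument is a one-line application of concavity followed by clearing a positive denominator. If one preferred a slicker route, one could instead cite the ``three-slope'' lemma directly: for a concave $f$ and $0 < a < b$, the secant slopes satisfy $\tfrac{f(a)-f(0)}{a-0} \geq \tfrac{f(b)-f(a)}{b-a}$, and multiplying by $a$ and using $f(0)=0$ gives the result immediately.
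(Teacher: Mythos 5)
Your proposal is correct and follows essentially the same route as the paper: both apply the definition of concavity with the convex combination $a = \tfrac{a}{b}\cdot b + \bigl(1-\tfrac{a}{b}\bigr)\cdot 0$ to obtain $f(a) \geq \tfrac{a}{b} f(b)$, and then rearrange algebraically using $b - a > 0$. The only cosmetic difference is that you split off the $a = 0$ case, which the paper's single substitution $\alpha = a/b$ handles uniformly.
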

\begin{IEEEproof}
By definition of concave function, $\forall x, y \in \mathbb{R}_{\geq 0}$ and $\alpha \in [0,1]$ it holds:
\begin{equation} \label{eq:concave}
f\big( (1-\alpha) \cdot x + \alpha \cdot y \big) \geq (1 - \alpha) \cdot f(x) + \alpha \cdot f(y).
\end{equation}
Substituting $x=0, y = b$ and $\alpha = \frac{a}{b}$ in Equation~\ref{eq:concave} and given $f(0) = 0$ we have:
\begin{equation} \label{eq:fa}
f(a) \geq \Big(1 - \frac{a}{b}\Big) \cdot f(0) + \frac{a}{b} \cdot f(b) = \frac{a}{b} \cdot f(b).
\end{equation}

Finally, using Equation~\ref{eq:fa} we obtain:
\begin{align}
& f(a) - a \cdot \frac{f(b) - f(a)}{b - a} = \nonumber \\
& \frac{f(a)\cdot b - f(a)\cdot a - f(b)\cdot a +f(a)\cdot a}{b-a} = \nonumber \\
& \frac{f(a)\cdot b  - f(b)\cdot a }{b-a} \geq \nonumber \\
& \frac{\frac{a}{b} \cdot f(b)\cdot b  - f(b)\cdot a }{b-a} = 0, 
\end{align}
which yields the hypothesis.
\end{IEEEproof}

\begin{lemma} \label{lm:monotonicI}
Consider a function $g: \mathbb{R}_{\geq 0} \rightarrow \mathbb{R}$ such that $g$ is concave and $g(0) = 0$. Then $\forall \mu \in \mathbb{R}_{\geq 0}, \forall x \in \mathbb{R}_{> 0}$:  $x \cdot g(\mu / x)$ is monotonic non decreasing in $x$.
\end{lemma}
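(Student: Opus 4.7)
The plan is to reduce monotonicity of $h(x) := x \cdot g(\mu/x)$ to the inequality already established in Lemma~\ref{lm:curve_above_zero}. Intuitively, Lemma~\ref{lm:curve_above_zero} says that for a concave $f$ with $f(0)=0$, the secant-line evaluated at zero lies below $f(a)$, which algebraically gives $f(a) \cdot b \geq f(b) \cdot a$ whenever $0 < a < b$. This is exactly the cross-multiplied form of monotonicity for $x \mapsto x \cdot g(\mu/x)$ after the change of variables $a = \mu/x_2$, $b = \mu/x_1$.

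First I would dispose of the degenerate case $\mu = 0$: then $g(\mu/x) = g(0) = 0$ for every $x > 0$, so $h$ is identically zero and trivially monotonic. I would then assume $\mu > 0$ and pick arbitrary $0 < x_1 < x_2$. Setting $a := \mu/x_2$ and $b := \mu/x_1$, we have $0 < a < b$, so the hypotheses of Lemma~\ref{lm:curve_above_zero} (applied to $f = g$) are satisfied. The lemma yields
\begin{equation}
g(a) - a \cdot \frac{g(b) - g(a)}{b-a} \geq 0,
\end{equation}
which after clearing denominators and cancelling the $a \cdot g(a)$ terms simplifies to
\begin{equation}
g(a) \cdot b \geq g(b) \cdot a.
\end{equation}

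Substituting back $a = \mu/x_2$ and $b = \mu/x_1$, dividing both sides by $\mu > 0$ and multiplying by $x_1 x_2 > 0$, I obtain $x_2 \cdot g(\mu/x_2) \geq x_1 \cdot g(\mu/x_1)$, i.e.\ $h(x_2) \geq h(x_1)$. Since $x_1 < x_2$ were arbitrary, $h$ is monotonic non-decreasing on $\mathbb{R}_{>0}$.

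There is no real obstacle here: the hard work has been packaged into Lemma~\ref{lm:curve_above_zero}, which itself uses only concavity and $g(0)=0$, so no differentiability of $g$ is needed (a convenience, since the stall curves $\bar{I}(r)^j$ in the paper are only piecewise linear). The only minor care point is ensuring $\mu > 0$ before invoking the lemma, which is why I separate out the $\mu=0$ case at the start.
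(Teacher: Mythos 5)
Your proof is correct and takes essentially the same route as the paper's: both reduce the claim to Lemma~\ref{lm:curve_above_zero}. The only difference is cosmetic --- the paper applies that lemma to the reparametrized function $f(y) = g(\mu \cdot y)$ at $a = 1/x_2$, $b = 1/x_1$ and then manipulates an auxiliary constant $K$, whereas you apply it directly to $g$ at $a = \mu/x_2$, $b = \mu/x_1$ and cross-multiply, which is slightly more direct and correctly forces you to dispose of the $\mu = 0$ case separately (the paper's substitution handles $\mu=0$ implicitly, since $f \equiv 0$ there).
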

\begin{IEEEproof}
We have to show that $\forall x_1, x_2 \in \mathbb{R}_{>0}$ with $x_2 > x_1$:
\begin{equation}
x_2 \cdot g(\mu / x_2) \geq x_1 \cdot g(\mu / x_1).
\end{equation}

Let us define $K = \frac{ g(\mu/x_1) - g(\mu/x_2) } {1/x_1 - 1/x_2}$. We first show that: 
\begin{equation} \label{eq:g_K}
g(\mu/x_2) - K/x_2 \geq 0.
\end{equation}
Define $f(y) = g(\mu \cdot y), a = 1/x_2, b = 1/x_1$. Note that it holds $b > a > 0, f(0) = g(\mu \cdot 0) = 0$, and since $g$ is concave, $f$ is also concave. Then we obtain by substitution:
$g(\mu/x_2) - K/x_2 = f(a) - a \cdot \frac{f(b) - f(a)}{b - a}$,
which by Lemma~\ref{lm:curve_above_zero} is greater than or equal to 0. 

Finally, using Equation~\ref{eq:g_K} we obtain:
\begin{align}
&x_2 \cdot g(\mu/x_2) = x_2 \cdot \big( g(\mu/x_2) - K/x_2 \big) + K \geq \nonumber \\
&x_1 \cdot \big( g(\mu/x_2) - K/x_2 \big) + K = \nonumber \\
&x_1 \cdot \big( g(\mu/x_2) - K/x_2 + K/x_1 \big) = \nonumber \\
&x_1 \cdot \big( g(\mu/x_2) + K \cdot (1/x_1 - 1/x_2) \big) = \nonumber \\
&x1 \cdot \big( g(\mu/x_2) + g(\mu/x_1) - g(\mu/x_2) \big) = x_1 \cdot g(\mu/x_1),
\end{align}
completing the proof.
\end{IEEEproof}

\begin{IEEEproof}[Proof of Theorem \ref{theorem1}]
  We show that the iteration in Equation~\ref{eq:iteration_static} is a special 
  case of Equation~\ref{eq:iteration_dynamic}, which implies that 
  Theorem~\ref{thm:wcet_single} follows as a corollary of 
  Theorem~\ref{thm:length_dynamic}. $\wspan{(0)}$ is computed in the same way, 
  so we reason about the expression for $\wspan{(k)}$.
  
  Note that the static budget scenario in Section~\ref{sec:wcet_1_bud} is 
  equivalent to a dynamic scenario where there is only one memory scheduling 
  interval $B^1$ of unbounded length. Hence, we have $\wspan[1]{(k-1)} = 
  \wspan{(k-1)}$, and we can define $\Iconc(r) = \Iconc(r)^1$ and $\Q = \Q^1$ 
  without loss of generality. The stall term in 
  Equation~\ref{eq:iteration_dynamic} is thus equal to:
  \begin{align} 
  & \sum_{j=1}^N \Iconc(\muint[j]{(k-1)} / \wspan[j]{(k-1)})^j \cdot 
  \wspan[j]{(k-1)} = \nonumber \\
  & \Iconc(\muint[1]{(k-1)} / \wspan{(k-1)})  \cdot \wspan{(k-1)}. 
  \label{eq:equivalence_sd}
  \end{align}
  Next, consider Algorithm~\ref{alg:opt}; since there is only one variable 
  $\muint[1]{(k-1)}$, the constraints at 
  Lines~\ref{alg:WQ},~\ref{alg:split_memory} are equivalent to: 
  $\muint[1]{(k-1)} \leq \wspan{(k-1)} \cdot \Q$ and $\muint[1]{(k-1)} \leq 
  \mu$. But since increasing the value of the variable cannot cause the 
  objective function to decrease, it follows that the assignment 
  $\muint[1]{(k-1)} = \min(\mu, \wspan{(k-1)} \cdot \Q)$ must maximize the 
  stall. Substituting this value into Equation~\ref{eq:equivalence_sd} yields: 
  \begin{align}
  & \Iconc\big( \min(\mu, \wspan{(k-1)} \cdot \Q)/ \wspan{(k-1)} \big)  \cdot 
  \wspan{(k-1)} = \nonumber \\
  & \Iconc\big( \min(\mu / \wspan{(k-1)}, \Q) \big)  \cdot \wspan{(k-1)}, 
  \end{align}
  which is the stall term in Equation~\ref{eq:iteration_static}. This shows 
  that the two iterations over $\wspan{(k)}$ are identical, completing the 
  proof.
  
\end{IEEEproof} 
\end{document}